\algnewcommand\algorithmicforeach{\textbf{for each}}
\newcommand{%
    \input{}%
}[1]{%
    \input{#1}%
}
\newcommand{\karri}[0]{KaRRi\xspace}
\newcommand{\karricch}[0]{KaRRi-CCH\xspace}
\newcommand{\starsplus}[0]{STaRS+\xspace}
\newcommand{\myalg}[0]{\karri}
\newcommand{\myalgcch}[0]{\karricch}
\newcommand{\parheader}[1]{\paragraph*{#1.}}
\newcommand{\mysubsection}[1]{\subsection{#1.}}
\newcommand{\tabmyalg}[0]{K}
\newcommand{\tabmyalgcch}[0]{K-CCH}
\newcommand{\tabbase}[0]{B}
\newcommand{\tabbaseast}[0]{B$^\ast$}
\newcommand{\setc}[2]{\left \{ #1 \mid #2 \right \}}
\renewcommand{\gets}[0]{:=}
\newcommand{\Vveh}[0]{\ensuremath{V}}
\newcommand{\Gpsg}[0]{\ensuremath{G_{\textit{psg}}}}
\newcommand{\Vpsg}[0]{\ensuremath{V_{\textit{psg}}}}
\newcommand{\Epsg}[0]{\ensuremath{E_{\textit{psg}}}}
\newcommand{\dist}[0]{\ensuremath{\delta}}
\newcommand{\distveh}[0]{\ensuremath{\dist}}
\newcommand{\distpsg}[0]{\ensuremath{\delta_{\textit{psg}}}}
\newcommand{\orig}[0]{\textit{orig}}
\newcommand{\dest}[0]{\textit{dest}}
\newcommand{\Prho}[0]{\ensuremath{P_{\rho}}}
\newcommand{\Drho}[0]{\ensuremath{D_{\rho}}}
\newcommand{\maxPDDist}[0]{\ensuremath{\dist^{\max}_{\textit{PD}}}}
\newcommand{\ins}[0]{\ensuremath{\iota}}
\newcommand{\numstops}[1]{\ensuremath{k(#1)}}
\newcommand{\numstopsnu}[0]{\numstops{\nu}}
\newcommand{\laststop}[1]{\ensuremath{s_{\numstops{#1}}}}
\newcommand{\laststopnu}[0]{\laststop{\nu}}
\newcommand{\stoploc}[0]{\ensuremath{\textit{loc}}}
\newcommand{\curloc}[0]{\ensuremath{\textit{loc}}}
\newcommand{\Ord}[0]{\emph{ordinary}\xspace}
\newcommand{\OP}[0]{\emph{ordinary paired}\xspace}
\newcommand{\PBNS}[0]{\emph{pickup before next stop}\xspace}
\newcommand{\PALS}[0]{\emph{pickup after last stop}\xspace}
\newcommand{\rank}[0]{\ensuremath{\text{rank}}}
\newcommand{\Gup}[0]{\ensuremath{G^{\uparrow}}}
\newcommand{\Gdown}[0]{\ensuremath{G^{\downarrow}}}
\newcommand{\Vup}[0]{\ensuremath{V^\uparrow}}
\newcommand{\Vdown}[0]{\ensuremath{V^\downarrow}}
\newcommand{\Eup}[0]{\ensuremath{E^\uparrow}}
\newcommand{\Edown}[0]{\ensuremath{E^\downarrow}}
\newcommand{\dup}[0]{\ensuremath{\delta^{\uparrow}}}
\newcommand{\ddown}[0]{\ensuremath{\delta^{\downarrow}}}
\newcommand{\tdist}[0]{\ensuremath{\tilde{\delta}}}
\newcommand{\tddown}[0]{\ddown}
\newcommand{\treq}[0]{\ensuremath{t_{\textit{req}}}}
\newcommand{\tarr}[0]{\ensuremath{t_{\textit{arr}}}}
\newcommand{\tdep}[0]{\ensuremath{t_{\textit{dep}}}}
\newcommand{\twaitmax}[0]{\ensuremath{t_{\textit{wait}}^{\textit{max}}}}
\newcommand{\ttripmax}[0]{\ensuremath{t_{\textit{trip}}^{\textit{max}}}}
\newcommand{\tdepmin}[0]{\ensuremath{t_{\textit{dep}}^{\textit{min}}}}
\newcommand{\tarrmin}[0]{\ensuremath{t_{\textit{arr}}^{\textit{min}}}}
\newcommand{\tvehwait}[0]{\ensuremath{w}}
\newcommand{\tarrminprime}[0]{\ensuremath{\tarrmin{}'}}
\newcommand{\tdepminprime}[0]{\ensuremath{\tdepmin{}'}}
\newcommand{\remleeway}[0]{\ensuremath{\lambda_{\textit{res}}}}
\newcommand{\tdetour}[0]{\ensuremath{t_{\textit{detour}}}}
\newcommand{\ttrip}[0]{\ensuremath{t_{\textit{trip}}}}
\newcommand{\ttripplus}[0]{\ensuremath{t_{\textit{trip}}^{+}}}
\newcommand{\twalk}[0]{\ensuremath{t_{\textit{walk}}}}
\newcommand{\ctripvio}[0]{\ensuremath{c_{\textit{trip}}^{vio}}}
\newcommand{\cwaitvio}[0]{\ensuremath{c_{\textit{wait}}^{vio}}}
\newcommand{\tripweight}[0]{\ensuremath{\tau}}
\newcommand{\walkweight}[0]{\ensuremath{\omega}}
\newcommand{\gammatrip}[0]{\ensuremath{\gamma_{\textit{trip}}}}
\newcommand{\gammawait}[0]{\ensuremath{\gamma_{\textit{wait}}}}
\newcommand{\initpdetour}[0]{\ensuremath{\blacktriangle_{p}}}
\newcommand{\initddetour}[0]{\ensuremath{\blacktriangle_{d}}}
\newcommand{\resdetour}[1]{\ensuremath{\triangle_{#1}}}
\newcommand{\cmaxglobal}[0]{\ensuremath{\hat{c}}} 
\newcommand{\cmax}[0]{\ensuremath{c_{\max}}}
\newcommand{\cmin}[0]{\ensuremath{c_{\min}}}
\newcommand{\Bs}[0]{\ensuremath{B^{\uparrow}}}
\newcommand{\Bt}[0]{\ensuremath{B^{\downarrow}}}
\newcommand{\Blast}[0]{\ensuremath{B^{\uparrow}}}
\newcommand{\open}[0]{\ensuremath{\textit{open}}}
\newcommand{\closed}[0]{\ensuremath{\textit{closed}}}
\newcommand{\bestp}[0]{\ensuremath{p^\ast}}
\newcommand{\bestd}[0]{\ensuremath{d^\ast}}
\newcommand{\bestPALSins}[0]{\ensuremath{\iota_{\text{pals}}^{\ast}}}
\newcommand{\deltacmax}[0]{\ensuremath{\Delta_{\textit{c}}^{\max}}}
\newcommand{\deltatdepatpickup}[0]{\ensuremath{\Delta_{\textit{dep}}}}
\newcommand{\deltadetour}[0]{\ensuremath{\Delta_{\textit{detour}}}}
\newcommand{\deltattrip}[0]{\ensuremath{\Delta_{\textit{trip}}}}
\newcommand{\deltatwalk}[0]{\ensuremath{\Delta_{\textit{walk}}}}
\newcommand{\deltawaitvio}[0]{\ensuremath{\Delta_{\textit{wait}}^{\textit{vio}}}}
\newcommand{\deltatripvio}[0]{\ensuremath{\Delta_{\textit{trip}}^{\textit{vio}}}}
\newcommand{\BerlinOne}[0]{\texttt{Berlin-1pct}\xspace}
\newcommand{\BerlinTen}[0]{\texttt{Berlin-10pct}\xspace}
\newcommand{\RuhrOne}[0]{\texttt{Ruhr-1pct}\xspace}
\newcommand{\RuhrTen}[0]{\texttt{Ruhr-10pct}\xspace}
\newcommand{\ShortBerlinOne}[0]{\texttt{B-1\%}\xspace}
\newcommand{\ShortBerlinTen}[0]{\texttt{B-10\%}\xspace}
\newcommand{\ShortRuhrOne}[0]{\texttt{R-1\%}\xspace}
\newcommand{\ShortRuhrTen}[0]{\texttt{R-10\%}\xspace}
\newcommand{\s}[0]{\si{\second}}
\newcommand{\ms}[0]{\si{\milli\second}}
\newcommand{\mus}[0]{\si{\micro\second}}
\newcommand{\numpickups}[0]{\ensuremath{N_{\rho}^{p}}}
\newcommand{\numdropoffs}[0]{\ensuremath{N_{\rho}^{d}}}
\newcolumntype{R}{>{$}r<{$}} 
\crefname{@theorem}{theorem}{theorems}
\Crefname{@theorem}{Theorem}{Theorems}
\begin{document}
\pagestyle{plain}
\newcommand\relatedversion{}
\renewcommand\relatedversion{\thanks{This work is based on an earlier technical report~\cite{Laupichler2023}}} 

\title{\Large Fast Many-to-Many Routing for Dynamic Taxi Sharing with Meeting Points\relatedversion}
\author{
  Moritz Laupichler\thanks{Karlsruhe Institute of Technology, Institute for Theoretical Informatics, Algorithm Engineering.}\saveFN{\affiliation} \and
  Peter Sanders\useFN{\affiliation}}

\date{}

\maketitle

\begin{abstract} 
\small\baselineskip=9pt 
We introduce an improved algorithm for the dynamic taxi sharing problem, i.e. a dispatcher that schedules a fleet of shared taxis as it is used by services like UberXShare and Lyft Shared.
We speed up the basic online algorithm that looks for all possible insertions of a new customer into a set of existing routes, we
generalize the objective function, and we efficiently support a large number of possible pick-up and drop-off locations.
This lays an algorithmic foundation for taxi sharing systems with higher vehicle occupancy -- enabling greatly reduced cost and ecological impact at comparable service quality.
We find that our algorithm computes assignments between vehicles and riders several times faster than a previous state-of-the-art approach.
Further, we observe that allowing meeting points for vehicles and riders can reduce the operating cost of vehicle fleets by up to $15\%$ while also reducing rider wait and trip times. 
\end{abstract}


\section{Introduction}
\label{sec:Introduction}
Current transportation systems are largely based on a combination of individual transport (often with heavy, polluting cars that consume a lot of energy and space) and public transportation that is often slow, inconvenient, and underdeveloped.
Recently, \emph{taxi sharing} systems that intelligently control fleets of shared taxi-like vehicles have garnered a lot of attention as a promising means of interpolating between the economical and ecological benefits of public transportation and the convenience and flexibility of individually used cars.
The traffic engineering community has extensively studied the possible advantages of such systems in a large number of simulation studies~\cite{Bischoff2017,Kuehnel2023,Agatz2011,Fagnant2014,Wilkes2021,Zwick2022} and real-world field tests~\cite{Gilibert2020,Kostorz2021,Yu2017,Weckstroem2018,Jokinen2019,Chichung2008,Gargiulo2015,Zhu2021}.
A widespread adaptation of taxi sharing is expected to coincide with an increased demand for sustainable personal transportation~\cite{Song2021,Yu2017} and the availability of autonomously piloted vehicles~\cite{Fagnant2014,Fagnant2018,Milakis2017,Duarte2018,Badue2021}.

A main issue of current such systems is that the potential for shared rides is usually limited as each additional stop made to pick up or drop off a rider causes delays for other riders.
This makes taxi sharing less attractive and makes larger capacity vehicles infeasible.

We focus on the question of how riders can use local transportation (e.g., walking, bicycles or scooters) to reach a pickup or dropoff location (\emph{meeting point}) that causes less delay for a vehicle~\cite{Stiglic2015,Aissat2014}, may be shared with other customers~\cite{Stiglic2015,Kaan2013}, and may alleviate concerns of privacy for riders~\cite{Goel2016PuPs}. 
This acts as a first step towards a hierarchy of personal transportation consisting of local transportation, taxi sharing, and public transit, promising economical and ecological benefits compared to current transportation systems.

Our starting point is the dynamic taxi sharing dispatcher by \citet{Buchhold2021}.
It uses one-to-many routing based on bucket contraction hierarchies (BCHs) \cite{Knopp2007,Geisberger2012} to efficiently compute the best feasible assignments of riders to vehicles. 
This is a crucial step for handling large fleets in real time and computing realistic simulations of such systems in transportation research.

We introduce the \myalg (\underline{Ka}rlsruhe \underline{R}apid \underline{Ri}desharing) algorithm that extends the dispatcher with the possibility of performing the pickup and dropoff of a rider not at fixed locations but at meeting points which can be any location close to the rider's origin and destination.
The algorithm computes assignments of riders to vehicles (including locations for the pickup and dropoff) that minimize the cost of the assignment w.r.t. a global objective function and the current vehicle routes.
We adapt this objective function to the new scenario with meeting points by incorporating rider trip times and overheads for local travel to and from meeting points.

Finding not only the best vehicle for a request but an optimal combination of a vehicle, a pickup location, and a dropoff location leads to a much larger number of possible assignments. 
To determine the best assignment, we need to solve a number of many-to-many routing problems between vehicle locations and \emph{all} possible meeting points.
We use BCH queries to address this issue and propose novel speedup techniques both for general purpose bucket based queries and for the specific case of localized sources or targets.
We find that these techniques are also applicable for faster routing in a scenario without meeting points. 

Our experimental evaluation uses realistic data sets to evaluate the efficiency of these measures.
In a scenario without meeting points, our implementation is several times faster than the state-of-the-art dispatcher~\cite{Buchhold2021}.
For multiple meeting points, our routing techniques are up to three orders of magnitude faster than a na\"ive extension of previous techniques.
We also give first indications that meeting points can reduce the operating costs of a taxi fleet by up to $15\%$ without increasing rider wait times or trip times.  
A closer investigation of possible effects on the transport system is left to future work likely in cooperation with application experts.

\mysubsection{Related Work}
\label{subsec:related_work}
Taxi sharing and related problems are well studied in transportation research.
We summarize existing solution approaches and research into the effect of meeting points on such systems.

\parheader{Taxi Sharing}
\emph{Taxi sharing} (also called \emph{ride pooling}) is the problem of dispatching rider requests asking to go from an origin to a destination location to a fleet of taxi-like vehicles while adhering to rider time constraints like a latest possible arrival time.
The dispatcher tries to find assignments of riders to vehicles that optimize an objective function such as the total operation time of all vehicles.

Taxi sharing can be seen as a special case of the well studied \emph{Dial-a-Ride problem (DARP)}~\cite{Cordeau2007,Ho2018}.
Most research on taxi sharing deals with the static variant of the problem where all rider requests are known in advance, including their individual time constraints.
The static problem is known to be NP-complete~\cite{Shuo2013,Savelsbergh1985}.
Small problem instances can be solved optimally using integer programming~\cite{Cordeau2006,AlonsoMora2017,Hosni2014,Cordeau2007}.
Other solutions sacrifice optimality for better performance using meta-heuristics like simulated annealing~\cite{Lin2012,Jung2016}, GRASP~\cite{Santos2015}, or the artificial bee colony algorithm~\cite{Zhan2022}.

We study the dynamic taxi sharing problem.
Here, the dispatcher is informed about requests as they come in and has to assign riders to vehicles in that order without knowing future requests.
Though there is increasing interest in dynamic ridesharing with stochastic information about future rider demand~\cite{Manna2014,Schilde2011,Ma2019}, we stick to the traditional agnostic view~\cite{Psaraftis1980}.
Thus, we are concerned with local decision heuristics that try to find a best assignment for each request, attempting to minimize the negative impact on the global objective function or \emph{cost} of the chosen assignment.
Note that the routing techniques discussed in this paper are also applicable to static and stochastic dispatchers as they, too, need to compute many-to-many shortest path queries. 

A lot of work on dynamic taxi sharing focuses on enumerating assignments and assessing their feasibility w.r.t. the riders' time constraints~\cite{Jaw1986,Madsen1995,Hunsaker2002}.
For this, the dispatcher needs to know the extent of the vehicle detours made to service the new rider.
Oftentimes, these detours are simply assumed to be known~\cite{Psaraftis1980,Jaw1986,Madsen1995,Horn2002,Hunsaker2002,Haell2012,Ota2017,Lotze2022}.
However, finding the shortest paths that comprise the detours in the road network poses a major time overhead and can become a bottleneck for the performance of a taxi sharing dispatcher.

Some recent works acknowledge this overhead by first employing filtering heuristics (e.g. based on geodesic distances~\cite{Bischoff2017,Horni2016} or spatial indices~\cite{Shuo2013,Huang2014,Ma2015}) to find a small set of candidate assignments s.t. shortest path queries only have to be executed for these candidates.
These heuristic dispatchers use varying shortest path algorithms as a black box, ranging in efficiency from Dijkstra's algorithm~\cite{Dijkstra1959} to hub labeling~\cite{Abraham2011}.

\Citet{Buchhold2021} employ a more involved approach by using the time constraints of already assigned riders to prune bucket contraction hierarchy searches~\cite{Knopp2007,Geisberger2012}, a state-of-the-art one-to-many shortest path algorithm.
This allows the shortest path algorithm itself to act as a filter of feasible assignments, efficiently computing both a set of candidate vehicles that is guaranteed to contain the best assignment and the required shortest paths. 
The algorithm is also equipped to work with bucket based searches in customizable contraction hierarchies~\cite{Dibbelt2016} which allow for fast readjustment of travel times in the road network caused by changing traffic conditions.

\parheader{Ride Matching}
In taxi sharing, the vehicles' only purpose is to service riders.
In opposition to this, the closely related \emph{ride matching} or \emph{ride sharing} problem assumes that each driver is a private entity with their own origin, destination and time constraints~\cite{Furuhata2013,Agatz2012,Herbawi2012}.

Ride matching is largely faced with the same challenges as taxi sharing.
Static solutions can be found with integer programming~\cite{Agatz2011, Stiglic2015} and branch-and-bound algorithms~\cite{Bistaffa2015}, or approximated with evolutionary algorithms~\cite{Herbawi2012}.
Approaches for the dynamic variant include locality-constrained greedy matching algorithms~\cite{Goel2016Privacy,Pelzer2015} and the application of static solutions for buffered sets of requests~\cite{Herbawi2012,Agatz2011}.
As with taxi sharing, BCHs may be suited to compute vehicle detours~\cite{Geisberger2010}.
For overviews on ride matching, we refer to~\cite{Furuhata2013} and~\cite{Agatz2012}.

\parheader{Meeting Points in Taxi Sharing}
\emph{Meeting points} allow riders to be picked up and dropped off at locations close to their origin and destination, respectively.
This requires the rider to walk a small distance but potentially reduces the cost of an assignment.

Taxi sharing with meeting points has started garnering attention only recently with publications on this problem variant first appearing in 2021.
Most works that we are aware of focus on the positive effects of meeting points on the operation costs and service quality of taxi sharing systems while largely not addressing the added complexity of the problem.

\citet{Fielbaum2021} and~\citet{Mounesan2021} independently extend a previous ILP formulation of the static taxi sharing problem~\cite{AlonsoMora2017} with meeting points.
The authors of both works come to the conclusion that it is a hard problem, akin to the Generalized Traveling Salesman Problem, to find the best route along with the best meeting points even for a fixed vehicle and set of requests.
Both works evaluate the impact of meeting points on static taxi sharing in an experimental study using the road network of Manhattan.
For this small road network (about 10000 edges), both works pre-compute and store all-pairs shortest path distances.
Meeting points are found to substantially increase the rate of requests that can be serviced within certain rider wait time and trip time limits, and to simultaneously decrease the total vehicle operation time.
\citet{Fielbaum2021} and~\citet{Mounesan2021} report that meeting points increase the time needed to find an optimal solution to the taxi sharing ILP by factors of about ten and five, respectively.
Due to these large running times, static solutions are not suited for real-time production systems.

\citet{Lotze2022} explore \emph{stop pooling}, a restricted form of meeting points, for dynamic taxi sharing.
Though the authors' experiments use a simplistic model with uniformly distributed requests and euclidian distances, they find compelling evidence that stop pooling can help to break the traditional trade-off between vehicle operation times and rider trip times. 

We are aware of only one work that considers the scalability of taxi sharing with meeting points on realistic metropolitan scale road networks.
For this purpose,~\citet{Mounesan2021} develop the dynamic taxi sharing simulation \starsplus next to their aforementioned ILP formulation for the static problem variant.
The dynamic dispatcher processes each request as soon as it is issued and greedily chooses the best vehicle and meeting points according to the current vehicle routes. 
A distance cache based on a partition of the road network is designed to facilitate pre-computing and storing all-pairs shortest path distances within reasonable memory limits for larger road networks.
Using the distance cache, \starsplus is shown to be able to answer requests on the road network of all five boroughs of New York City in about 10\ms{} with a fleet of 10000 vehicles, which represents an increase by a factor of about six compared to using no meeting points.
The authors find a reduction in the trip time and total vehicle miles traveled compared to a scenario without meeting points.
The experiments do not evaluate the query times for the new distance cache, inhibiting a comparison to similar techniques like transit node routing~\cite{Bast2007} or customizable route planning~\cite{Delling2017}.

Though \starsplus addresses the same problem as our work, there are some important differences in both the model details and the solution approach.

Firstly, \starsplus assumes a fixed hard limit to each rider's wait time and trip time.
Since many vehicles can then be excluded from consideration for most requests, this speeds up the dispatching process.
However, the hard time limits may cause requests to be rejected entirely.
Meanwhile, we only apply penalties for trips that break such limits which allows us to service every request but necessitates a more careful consideration of a larger set of candidate vehicles.

Secondly, \starsplus chooses the best meeting points for a given request heuristically while we explicitly develop methods that consider all combinations of potential pickup and dropoff locations and find the best one.

Thirdly, \starsplus uses a pre-computed static distance cache, whereas a focus of this work is the computation of shortest paths on-the-fly during the dispatching process.
This has the advantage of being more dynamic:
As travel times in road networks frequently change, e.g. due to congestion, the underlying data structures for on-the-fly shortest-path queries (e.g. a contraction hierarchy) can be updated in seconds while a distance cache has to be reconstructed from scratch.
The authors of \starsplus report a running time in the order of hours for this which would be too slow for updates in a production system.

\parheader{Meeting Points in Ride Matching}
Beyond this limited amount of work on meeting points in taxi sharing, several publications have studied meeting points on the closely related ride matching problem and have found a positive impact on the quality of matches.

\Citet{Li2016} show that it is NP-hard to find optimal meeting points for a set of ride matching requests even when considering only a single vehicle.
The authors present multiple dynamic programming based solution algorithms for a slightly relaxed problem variant.

\Citet{Goel2016PuPs,Goel2016Privacy} find a fixed set of possible meeting points that allow privacy-aware ride matching.
The meeting points cover the road network in such a way that any rider can communicate a small subset of meeting points close to their origin location to the driver without allowing them to identify the rider's true origin location.

\Citet{Aissat2014} consider optimizing existing ride matches with meeting points.
Assuming an existing match between a driver and a single rider both traveling from individual origin to destination locations, they find the optimal intermediate pickup and dropoff locations, reducing the detour and total travel cost. 

\Citet{Stiglic2015} evaluate meeting points as a way to improve the efficacy of a static ride matching system.
The authors suggest data reduction rules for feasible matches between drivers and groups of riders that share potential meeting points.
To make the problem more tractable, the authors limit every driver to two extra stops (one for pickups and one for dropoffs) and only allow a small number of select vertices to be used as meeting points. 
Nonetheless, in a simulation study (with Euclidian distances), the authors observe that meeting points improve the chance of finding feasible matches and reduce the total distance driven.

\mysubsection{Paper Overview}
After a more detailed problem statement in~\cref{sec:Problem_Statement} we introduce basic notation and techniques
in~\cref{sec:Preliminaries}. In~\cref{sec:conceptual_changes}, we examine necessary changes to the taxi sharing model caused by the introduction of meeting points.~\Cref{sec:the_algorithm,} gives an overview on the \myalg algorithm before~\cref{sec:ordinary_op_and_pbns_insertions,sec:pickup_after_last_stop_insertions,sec:dropoff_after_last_stop_insertions} describe our many-to-many routing techniques and their application in taxi sharing.~\Cref{sec:experimental_evaluation} contains our experimental evaluation on large scale realistic input instances. Finally,~\cref{sec:conclusion} summarizes ideas for future extensions of dynamic taxi sharing.


\section{Problem Statement}
\label{sec:Problem_Statement}
This section describes the formal foundations for the dynamic taxi sharing problem considered by our approach.

\parheader{Road Network}
We consider a \emph{road network} to be a directed graph $G=(V,E)$ where edges represent road segments and vertices represent intersections.
Every edge $e = (v,w) \in E$ has a travel time $\ell(e)=\ell(v,w)$.
We denote the \emph{shortest path distance} (i.e. travel time) from a vertex $v$ to a vertex $w$ by $\dist(v,w)$.

\parheader{Vehicle, Stop}
Our algorithm has access to a fleet $F$ of \emph{vehicles}.
The current \emph{route} $R(\nu) = \langle s_0(\nu), \dots, s_{\numstops{\nu}}(\nu) \rangle$ of a vehicle $\nu$ is a sequence of \emph{stops} scheduled for the vehicle.
The vehicle's current location is always somewhere between its previous (or current) stop $s_0(\nu)$ and its next stop $s_1(\nu)$.
We update the routes accordingly as vehicles reach stops or are assigned new stops.
Thus, $\numstops{\nu} = |R(\nu)| - 1$ is the number of stops that the vehicle yet has to visit.
Each stop $s$ is mapped to a vertex $\stoploc(s) \in V$ in the graph.
Given a sufficiently clear context, we may write $s_i$ instead of $s_i(\nu)$ and only $s_i$ instead of $\stoploc(s_i)$.

\parheader{Request}
In our scenario, the dispatcher receives ride requests and immediately assigns them to vehicles.
A request $r=(\orig, \dest, \treq)$ has an origin location $\orig \in V$, a destination location $\dest \in V$ and a time $\treq$ at which the request is issued. 
We do not allow pre-booking, i.e. the request time is also the earliest possible departure time. 

\parheader{Meeting Points}
We assume that riders can reach meeting points in their vicinity using local transportation such as walking or cycling.
We represent the paths accessible to this mode of transportation in a road network $\Gpsg = (\Vpsg, \Epsg)$.
For any request $r$, any two subsets of $\Vpsg \cap \Vveh$ can be chosen as the sets of potential pickup and dropoff locations for $r$.

Let $\distpsg(u,v)$ denote the \emph{rider shortest path distance} between vertices $u$ and $v$ in $\Gpsg$. 
We use a default set of pickup locations (or \emph{pickups}) $\Prho(r)$ that contains all eligible vertices that the rider can reach in $\Gpsg$ from $\orig(r)$ within a time radius $\rho$, i.e. $\Prho(r) \gets \{ p \in \Vpsg \cap \Vveh \mid \distpsg(\orig(r), p) \le \rho \}$.
Similarly, our default set of dropoff locations (or \emph{dropoffs}) is defined as $\Drho(r) \gets \{ p \in \Vpsg \cap \Vveh \mid \distpsg(d, \dest(r)) \le \rho \}$.
We collectively refer to the pickups and dropoffs of $r$ as the \emph{meeting points} of $r$.
Let $\numpickups(r) = |\Prho(r)|$ and $\numdropoffs(r) = |\Drho(r)|$.
We call a pair of pickup and dropoff a \emph{PD-pair} and the distance between a pickup and a dropoff a \emph{PD-distance}.
The radius $\rho$ is a model parameter.
For the sake of simplicity, we use the same $\rho$ for every request but the model also permits varying $\rho$ with each request.

If the context allows it, we omit $r$ in the notation of the terms defined above.
Further, for $p \in \Prho$ and $d \in \Drho$, we use $\distpsg(p)$ and $\distpsg(d)$ as shorthand for $\distpsg(\orig, p)$ and $\distpsg(d, \dest)$.
For ease of notation in the rest of this work, we use the term ``walking'' to mean any mode of local transportation for riders.

\parheader{Insertion}
For each request $r$, our dispatcher finds an \emph{insertion} of a pickup and dropoff of $r$ into any vehicle's route s.t. the cost of that insertion according to a cost function is minimized. 
We formalize an insertion as a tuple $(r, p, d, \nu, i, j)$ indicating that vehicle $\nu$ picks up request $r$ at pickup location $p \in \Prho$ immediately after stop $s_i$ and drops off $r$ at dropoff location $d \in \Drho$ immediately after stop $s_j$ with $0 \le i \le j \le \numstops{\nu}$ .

\mysubsection{Cost Function and Constraints}
\label{subsec:cost_function}
The cost $c(\ins)$ of an insertion $\ins = (r, p, d, \nu, i, j)$ represents the associated vehicle operation cost and the rider service quality in a linear combination of the form
\begin{equation}\label{eq:cost_function}
\begin{split}
  c(\ins) = \; & \tdetour(\ins) + \tripweight \cdot (\ttrip(\ins) + \ttripplus(\ins)) + \\
              & \walkweight \cdot \twalk(\ins) + \cwaitvio(\ins) + \ctripvio(\ins) \text{.}
\end{split}
\end{equation}
Here, the \emph{added vehicle operation time} $\tdetour(\ins)$ describes the time that vehicle $\nu$ needs for the detour it makes to accommodate the pickup at $p$ and dropoff at $d$ in its route.
The \emph{trip time} $\ttrip(\ins)$ denotes the time that passes between the issuing of request $r$ ($\treq(r)$) and the arrival of the rider at their destination $\dest(r)$, including waiting and walking times.
The detours made by $\nu$ may increase this trip time for existing riders of $\nu$.
The \emph{added trip time} $\ttripplus(\ins)$ is the sum of these increases for all affected riders. 
The \emph{walking time} $\twalk(\ins)$ represents how long the rider needs to move from their origin to the pickup and from the dropoff to their destination using local transportation. 
We weight the importance of rider trip times and walking times relative to the vehicle operation time using the model parameters $\tau$ and $\omega$.
Note that if $\tripweight = \walkweight = 0$, our cost function is equivalent to the one used in the baseline dispatcher by~\citet{Buchhold2021}.
We defer the formal definitions of the terms mentioned so far to~\cref{sec:conceptual_changes}.

The remaining terms $\cwaitvio(\ins)$ and $\ctripvio(\ins)$ describe penalties for violating constraints on the service quality of the new rider. 
We consider a total of four constraints originally put forth in~\cite{Buchhold2021}.
After the insertion, the following must hold:
First, the \emph{occupancy} of $\nu$ must never exceed a fixed capacity.
Second, the vehicle must still reach its last stop before a fixed \emph{end of its service time}.
Third, every rider already assigned to $\nu$ must still be picked up at their pickup stop within a \emph{maximum wait time} $\twaitmax$.
Fourth, every rider $\hat{r}$ already assigned to $\nu$ must still arrive at their destination within a \emph{maximum trip time} $\ttripmax(\hat{r}) = \alpha \cdot \distveh(\orig(\hat{r}), \dest(\hat{r})) + \beta$ where $\distveh(\orig(\hat{r}), \dest(\hat{r}))$ is the direct vehicle travel time from the origin to the destination of $\hat{r}$.
The values $\twaitmax$, $\alpha$ and $\beta$ are model parameters.

All four constraints are hard constraints w.r.t. requests already assigned to $\nu$.
If $\ins$ breaks a hard constraint, we set the cost to $\infty$.
For the request $r$ to be inserted, we treat the wait time and trip time constraints as soft constraints, i.e. violating them leads to the cost penalties $\cwaitvio(\ins)$ and $\ctripvio(\ins)$.
Assume, the rider is picked up at $p$ at time $\tdep$.
We define
\begin{align*}
  \cwaitvio(\ins) &= \gammawait \cdot \max\{\tdep - \treq(r) - \twaitmax, 0 \} \text{,} \\
  \ctripvio(\ins) &= \gammatrip \cdot \max\{\ttrip(\ins) - \ttripmax(r), 0 \}
\end{align*}
with model parameters $\gammawait$ and $\gammatrip$ that scale the severity of the penalties.

\section{Preliminaries}
\label{sec:Preliminaries}
In this section, we describe several shortest path algorithms used in this work.
Furthermore, we summarize the dynamic taxi sharing dispatcher introduced by~\citet{Buchhold2021} that serves as the basis of our work.

\mysubsection{Shortest Path Algorithms}
\label{subsec:shortest_path_algorithms}
In the following, we explain a number of algorithms that compute different variants of shortest path queries on road networks.

\parheader{Dijkstra's Shortest Path Algorithm}
\label{par:dijkstras_shortest_path_algorithm}
\emph{Dijkstra's shortest path algorithm}~\cite{Dijkstra1959} computes the shortest path from a source $s \in V$ to all other vertices in a weighted graph $G=(V,E,\ell)$.

The algorithm stores a distance label $\tdist(s,v)$ for every $v \in V$.
An addressable priority queue $Q$ with $\text{key}(v) = \tdist(s,v)$ contains active vertices. 
Initially, $Q \gets \{ s \}$, $\tdist(s,s) \gets 0$ and $\tdist(s,v) \gets \infty$ for $v \ne s$.
The algorithm repeatedly extracts the vertex with the smallest distance label from $Q$ and \emph{settles} it. 
To settle $u \in V$, each outgoing edge $(u,v) \in E$ is \emph{relaxed} by trying to improve the distance label $\tdist(s,v)$ with $\tdist(s,u) + \ell(e)$.
If the distance is improved, $v$ is inserted into $Q$.
The algorithm stops when $Q$ becomes empty. 

\parheader{Contraction Hierarchies}
\label{par:contraction_hierarchies}
\begin{figure}[t]
  \centering
  \includegraphics[width=0.35\textwidth]{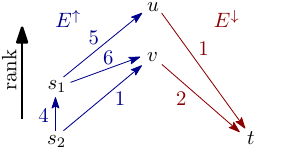}
  \caption{
    Example CH. 
    Edges are annotated with weights.
    Vertical order of vertices indicates rank. 
    Upward edges are blue and downward edges are red.
    \vspace*{-0.6cm}
  }
  \label{fig:ch_example}
\end{figure}
\emph{Contraction Hierarchies (CHs)}~\cite{Geisberger2012} are a speed-up technique for shortest path computations that exploits the hierarchical nature of road networks.
A CH is constructed in a pre-processing phase. 
Then, shortest path queries can be computed on the CH using restricted Dijkstra searches.

To construct a CH, all vertices in a road network $G=(V,E)$ are ordered heuristically by their importance or \emph{rank}~\cite{Geisberger2012}.
Vertices are contracted in the order of increasing rank.
The contraction of $v \in V$ temporarily removes $v$ from the graph.
To preserve shortest paths, a \emph{shortcut edge} $(u,w)$ is created if $(u,v,w) \in E^2$ is the only shortest path between $u$ and $w$.

Let $E^+$ contain all original edges $E$ as well as all shortcut edges.
The graph $G^+=(V, E^+)$ constitutes the CH.
The length $\ell^+(e)$ of a shortcut edge $e$ is the sum of the lengths of replaced original edges while $\dist^+$ is the according distance function.
For the query phase, we partition $E^+$ into \emph{up-edges} $\Eup = \setc{(u,v) \in E^+}{\rank(u) < \rank(v)}$ and \emph{down-edges} $\Edown = \setc{(u,v) \in E^+}{\rank(u) > \rank(v)}$.
We define an \emph{upwards search graph} $\Gup \gets (V, \Eup)$ and a \emph{downwards search graph} $\Gdown \gets (V, \Edown)$.
The distances $\dup$ and $\ddown$ represent $\dist^+$ constrained to $\Gup$ and $\Gdown$.

For any two vertices $s,t \in V$, it can be shown that there is a shortest path from $s$ to $t$ that is an \emph{up-down path} in the CH, i.e. consists of only up-edges followed by only down-edges~\cite{Geisberger2012}.
A \emph{CH query} from a source $s \in V$ to a target $t \in V$ runs a forward Dijkstra search from $s$ in $\Gup$ and a reverse Dijkstra search from $t$ in $\Gdown$.
Whenever the searches meet, they find an up-down-path from $s$ to $t$, eventually finding a shortest path.
The query can stop once the radius of either Dijkstra search exceeds the best previously found distance from $s$ to $t$.
For any vertex $v \in V$, let $\Gup_v$ denote the sub-graph of $\Gup$ that contains all vertices $\Vup_v$ and edges $\Eup_v$ that are reachable from $v$.
Similarly, let $\Gdown_v$ denote the sub-graph of $\Gdown$ that contains all vertices $\Vdown_v$ and edges $\Edown_v$ from which $v$ can be reached.
We call $\Gup_v$ and $\Gdown_v$ the \emph{forward} and \emph{reverse CH search space} of $v$, respectively.

\parheader{Bucket Contraction Hierarchy Searches}
\label{par:bucket_contraction_hierarchy_searches}
\emph{Bucket Contraction Hierarchy (BCH)} searches~\cite{Knopp2007,Geisberger2012} find all shortest path distances from a set of sources $S \subseteq V$ to a target $t \in V$ in a road network $G=(V,E)$.
A CH $G^+$ of $G$ is used as the basis of the algorithm.

The idea is to construct a \emph{(source) bucket} $\Bs(v)$ at each vertex $v \in V$.
Conceptually, $\Bs(v)$ is a list of entries, each one of which stores the upwards distance from one of the sources to $v$.  
For each source $s \in S$, a forward search in $\Gup$ is run that adds an entry $(s, \dup(s,v))$ to $\Bs(v)$ for every settled $v \in V$.
Then, a reverse search from $t$ in $\Gdown$ can compute tentative shortest path distances as $\dup(s,v) + \ddown(v,t)$ for every bucket entry $(s, \dup(s,v)) \in \Bs(v)$ at every settled vertex $v$.

Consider the example CH depicted in~\cref{fig:ch_example}.
 A BCH for $S=\{ s_1, s_2 \}$ would have the buckets $\Bs(u) = \langle (s_1,5), (s_2, 9) \rangle$ and $\Bs(v) = \langle (s_1, 6), (s_2, 1)) \rangle$.
A reverse search from $t$ traverses $\Gdown$ and finds shortest up-down paths between $s_1,s_2$ and $t$ by scanning $\Bs(u)$ and $\Bs(v)$.

BCH searches can analogously compute the distances from a single source to a set of targets.
In that case, we speak of \emph{target buckets} $\Bt(v)$ for every $v \in V$.

The advantage of BCH searches over point-to-point CH queries is that the search space of each source and each target is only traversed once, either to compute bucket entries or to scan bucket entries.
However, BCH searches require more memory to store the bucket entries.

\parheader{Bundled Searches}
\label{par:bundled_searches}
Dijkstra-based shortest path algorithms for multiple sources can be \emph{bundled} s.t. the searches for $k$ sources are advanced simultaneously.
A bundled search maintains $k$ tentative distance labels at each vertex.
When the search relaxes an edge $(u,v) \in E$, it tries to update all $k$ distance labels at $v$.

A bundled relaxation can be more cache efficient than $k$ individual relaxations as all $k$ distances are stored in consecutive memory.
However, the relaxation of $(u,v) \in E$ may perform unproductive work if not all $k$ searches have reached $u$ yet.
Thus, bundling is effective if all $k$ searches relax largely the same edges.
The value of $k$ is a tuning parameter.

Bundled searches were first introduced for Dijkstra searches used for the computation of arc-flags under the name \emph{centralized searches}~\cite{Hilger2009}.
Since then, bundled searches have been used in a number of Dijkstra-based many-to-many shortest path algorithms~\cite{Bauer2010,Yanagisawa2010,Delling2011,Delling2013,Delling2017}, including point-to-point queries in CHs~\cite{Buchhold2019}.

\parheader{SIMD Parallelism in Bundled Searches}
Bundled searches can be sped up using single-instruction multiple-data (SIMD) parallelism~\cite{Buchhold2019}.
Modern CPUs provide special vector registers and instructions that can store and manipulate multiple data items simultaneously.
We can vectorize the computations needed during edge relaxations s.t. $k$ computations are performed at the same time using a single vector instruction.
SIMD instructions can substantially speed up bundled searches~\cite{Buchhold2019}.

\mysubsection{LOUD}
\label{subsec:LOUD}
Our algorithm is based on the dynamic taxi sharing dispatching algorithm \emph{LOUD}~\cite{Buchhold2021}. 

Given a fleet of vehicles and routes, the online algorithm matches incoming taxi sharing requests to vehicles.
For each request, a feasible insertion of the request's origin $o$ and destination $d$ into a vehicle's route is found s.t. the detour of the vehicle is minimized.

\parheader{Elliptic Pruning}
To compute the costs of possible insertions, the algorithm requires the distances between existing vehicle stops and $o$ and $d$.
LOUD computes these distances using BCHs with bucket entries for each vehicle stop and queries run from $o$ and $d$.

We refer to these BCH searches as \emph{elliptic BCH searches} as they utilize a pruning technique for these buckets called \emph{elliptic pruning}:
Each insertion is subject to the same soft and hard constraints that we describe in~\cref{subsec:cost_function}.
The wait time and trip time hard constraints of riders already assigned to a vehicle $\nu \in F$ define a \emph{leeway} $\lambda(s_i,s_{i+1})$, i.e. a maximum permissible detour, between each pair of consecutive stops $(s_i,s_{i+1}) \in R(\nu)$. 
Any detour that exceeds $\lambda(s_i,s_{i+1})$ breaks some hard constraint and is infeasible. 
The leeway $\lambda(s_i,s_{i+1})$ defines a detour ellipse that contains all vertices at which a pickup or dropoff may be made between $s_i$ and $s_{i+1}$ without breaking a hard constraint.
Thus, bucket entries for $s_i$ and $s_{i+1}$ only need to be generated at vertices within the ellipse.
Elliptic pruning vastly reduces the number of bucket entries that need to be scanned by the BCH searches and limits the number of candidate vehicles for insertions~\cite{Buchhold2021}.

\parheader{Last Stop Distances}
LOUD also allows the insertion of the origin and/or destination after the last stop of a vehicle's route.
Here, elliptic pruning is not applicable since the leeway of any vehicle is unbounded after the last stop.
Instead, LOUD uses reverse Dijkstra queries in the road network rooted at $o$ or $d$ to find the distances from last stops to $o$ or $d$.
These Dijkstra queries, particularly for the destination of a request, constitute a significant part (at least $60\%$ and up to more than $90\%$) of the total runtime of LOUD.


\section{Conceptual Changes for Multiple Pickup and Dropoff Locations}
\label{sec:conceptual_changes}
We observe that introducing meeting points requires a careful consideration of their effects on vehicle detours and rider trips.
Here, we describe these effects in detail and establish the formal foundation of our cost function.

Remark that we make two simplifications in this section for the sake of brevity:
First, we only consider ordinary insertions that insert the pickup and dropoff after the next stop, before the last stop, and between two different pairs of stops of the vehicle's route (see~\cref{fig:insertion_types}). 
Formally, we only regard insertions $\ins = (r, p, d, \nu, i, j)$ with $0 < i < j < \numstopsnu$. 
Second, we do not consider the possibility of $p$ or $d$ coinciding with existing stops, i.e. we assume $p \ne l(s_i)$ and $d \ne l(s_j)$.
We ignore these cases as they would lead to bloated definitions.
However, with knowledge of the vehicle's current location and the location $\stoploc(s)$ of each stop $s \in R(\nu)$, the ignored cases could be integrated into the following definitions in a straight forward manner.

\mysubsection{Walking Time and Walking to the Destination}
\label{subsec:walking_time_and_walking_to_the_destination}
The walking time of a regular insertion $\ins=(r,p,d,\nu,i,j)$ is simply $\twalk(\ins) \gets \distpsg(p) + \distpsg(d)$.

We allow a rider $r$ to walk from their origin to their destination without ever boarding a vehicle.
This requires a manner of pseudo-insertion $\ins_{\text{psg}}$ where the rider is matched to no vehicle at all. 
Then, the cost of $\ins_{\text{psg}}$ depends only on the walking distance $\twalk(\ins_{\text{psg}}) = \ttrip(\ins_{\text{psg}}) = \distpsg(\orig, \dest)$.
The pseudo-insertion affects no vehicle operation times or trip times of other riders. 
We ignore the wait time soft constraint since the rider does not wait for a vehicle.
In effect, the total cost is $c(\ins_{\text{psg}}) = (\tripweight + \walkweight) \cdot \distpsg(\orig, \dest) + \ctripvio(\ins_{\text{psg}})$.
We explicitly allow pseudo-insertions for any distance $\distpsg(\orig,\dest)$, i.e. the distance does not have to be found within the radius $\rho$ around $\orig$ or $\dest$.
Instead, we use a CH query in the rider road network to find $\distpsg(\orig,\dest)$.
The cost $c(\ins_{\text{psg}})$ can serve as a first upper bound $\cmaxglobal$ on the cost of any insertion.

\mysubsection{Vehicles Waiting for Riders}
\label{subsec:vehicles_waiting_for_riders}
In dynamic taxi sharing without meeting points, a rider $r=(\orig, \dest, \treq)$ always waits to be picked up by the vehicle at $\orig$.
The request is issued at time $\treq$ and the vehicle $\nu$ matched to the request can start making its way to the pickup location at the earliest at $\treq$. 
This means that only the rider can wait for the vehicle, not the other way around.
Upon arriving at $\orig$, the vehicle can immediately depart (ignoring the time to pick up the rider).
Thus, a vehicle's schedule is fully characterized by the departure time $\tdepmin(s_i)$ at each stop (which is equal to the arrival time $\tarrmin(s_i)$).
The shortest path distances between stops can then be inferred as $\distveh(s_i, s_{i+1}) = \tdepmin(s_{i+1}) - \tdepmin(s_i)$.

In the presence of meeting points, vehicle schedules become more complex.
Consider an insertion $\ins = (r, p, d, \nu, i, j)$ with $p \ne \orig$.
Both the vehicle and the rider have to travel to $p$ starting at $\tdepmin(s_i)$ and $\treq$, respectively.
Consequently, the vehicle can arrive at $p$ earlier than the rider, precisely if $\tdepmin(s_i) + \distveh(s_i, p) < \treq + \distpsg(p)$.
In that case, the vehicle needs to wait for the rider at $p$ for a time $\tvehwait(s_a)$. 
Thus, the actual departure time at a pickup is the maximum of the earliest possible departure time of the vehicle and that of the rider:
\[
  \tdep(\ins) = \max \{ \tdepmin(s_i) + \distveh(s_i, p), \treq + \distpsg(p) \} \text{.}
\]
We later use $\tdep(\ins)$ in the definitions of the vehicle detour, rider wait time and rider trip time needed for the cost function (see~\cref{subsec:cost_function}). 
In particular, the wait times of a vehicle or of a rider contribute to the vehicle operation times and rider trip times.

Whereas in the traditional model $\tarrmin(s) = \tdepmin(s)$ for every stop $s$, we have to now explicitly store $\tarrmin(s)$ and $\tdepmin(s)$.
We can then derive the vehicle wait times as $\tvehwait(s_i) = \tdepmin(s_i) - \tarrmin(s_i)$ and the distance between a pair of consecutive stops as $\distveh(s_i, s_{i+1}) = \tarrmin(s_{i+1}) - \tdepmin(s_i)$.

\mysubsection{Added Vehicle Operation Time and Rider Trip Times}
\label{subsec:times_for_costs_definitions}
In the following, we define the added vehicle operation time $\tdetour(\ins)$, the trip time for the new rider $\ttrip(\ins)$, as well as the sum of added trip times for existing riders $\ttripplus(\ins)$ for an insertion $\ins = (r, p, d, \nu, i, j)$.

\parheader{Detours}
We can express all changes to the vehicle operation time and rider trip times in terms of the detours made by $\nu$ to perform an additional pickup at $p$ and dropoff at $d$.

\begin{Definition} 
\label{def:full_detour}
The \emph{full pickup (dropoff) detour} for an insertion $\ins = (r,p,d,\nu,i,j)$ is the detour that results from the vehicle $\nu$ first driving to $p$ ($d$) after stop $s_i$ ($s_j$) instead of driving to $s_{i+1}$ ($s_{j+1}$) directly.\\ 
Formally, we define the full pickup detour $\initpdetour(\ins)$ and the full dropoff detour $\initddetour(\ins)$ as 
\begin{align*}
  \initpdetour(\ins) & \gets \tdep(\ins) - \tdepmin(s_i) + \distveh(p,s_{i+1}) - \distveh(s_i,s_{i+1}) \\
  \initddetour(\ins) & \gets \distveh(s_j, d) + \distveh(d, s_{j+1}) - \distveh(s_j,s_{j+1}) \\
\end{align*}
\end{Definition}

Intuitively, the vehicle operation time increases by the sum of these full detours.
However, existing vehicle wait times at later stops $\nu$ (see~\cref{subsec:vehicles_waiting_for_riders}) can act as buffers that reduce the added vehicle operation time.
Assume that $\nu$ has to wait for a rider at a stop $s_a$ with $i < a \le j$.
Then, the insertion $\ins$ will cause $\nu$ to arrive at $s_a$ with a delay of $\initpdetour(\ins)$.
However, $\nu$ would have spent a time $\tvehwait(s_a)$ waiting at $s_a$ anyways.
This time is now spent making (part of) the detour $\initpdetour(\ins)$.
Thus, the arrival at $s_{a+1}$ is only delayed by $\initpdetour(\ins) - \tvehwait(s_a)$.
We call this a \emph{residual detour}.

\begin{Definition} 
\label{def:residual_detour}
  The \emph{residual detour} $\resdetour{a}(\ins)$ for an insertion $\ins = (r, p, d, \nu, i, j)$ at stop $s_a \in R(\nu)$ describes how much later the vehicle $\nu$ will arrive at stop $s_a$ after the insertion is performed. 
  We define it inductively as
  \[ \resdetour{a + 1}(\ins) \gets \begin{cases} 
      0 & \text{if } a < i \\
      \initpdetour(\ins) & \text{if } a = i \\
      \max \{ \resdetour{j}(\ins) - \tvehwait(s_j), 0 \} + \initddetour(\ins) & \text{if } a = j \\
      \max \{ \resdetour{a}(\ins) - \tvehwait(s_{a}), 0 \} & \text{otherw.}
  \end{cases}\]
\end{Definition}

\parheader{Added Vehicle Operation Time, Trip Times}
Residual detours allow us to define the added vehicle operation times as well as the trip times of both the new rider and existing riders.

The added operation time of $\nu$ is the delay of the arrival of $\nu$ at its last scheduled stop.
This is simply the residual detour at the last stop.

\begin{Definition} 
  The added vehicle operation time $\tdetour(\ins)$ caused by an insertion $\ins = (r, p, d, \nu, i, j)$ is defined as
  \[ \tdetour(\ins) \gets \resdetour{\numstopsnu}(\ins) \]
\end{Definition}

Further, residual detours define the new arrival times $\tarrminprime$ and departure times $\tdepminprime$ after performing $\ins$ as
\begin{align*}
  \tarrminprime(s_a, \ins) &= \tarrmin(s_a) + \resdetour{a}(\ins) \text{ and}\\
  \tdepminprime(s_a, \ins) &= \max \{ \tarrmin{}'(s_a), \tdepmin(s_a) \} \text{.}
\end{align*}
With this, we can calculate the new rider's total trip time. 
\begin{Definition} 
  \label{def:trip_time}
  The trip time $\ttrip(\ins)$ for an insertion $\ins = (r,p,d,\nu,i,j)$ is defined as 
  \[ \ttrip(\ins) \gets \tdepminprime(s_j,\ins) + \distveh(s_j, d) + \distpsg(d) - \treq(r) \]
\end{Definition}

Each existing rider experiences an added trip time depending on where they are dropped off.
The delay of each riders arrival at their dropoff stop is the residual detour at that stop.
\begin{Definition} 
  \label{def:added_trip_time_for_existing_riders}
  Let $\numdropoffs(s)$ be the number of dropoffs currently scheduled to be performed at stop $s \in R(\nu)$ for a vehicle $\nu \in F$.
  The combined added trip time for existing riders $\ttripplus(\ins)$ caused by an insertion $\ins = (r, p, d, \nu, i, j)$ is defined as
  \[ \ttripplus(\ins) \gets \sum_{a = i + 1}^{\numstops{\nu}} \numdropoffs(s_a) \cdot \resdetour{a}(\ins) \]
\end{Definition}


\section{Algorithm Overview}
\label{sec:the_algorithm}
We introduce the \emph{\karri} (\underline{Ka}rlsruhe \underline{R}apid \underline{Ri}desharing) algorithm that efficiently answers ridesharing requests with multiple meeting points using fast many-to-many routing.  
\begin{figure*}[tb]
  \centering
  \begin{tabular}{ccl}
    \hline
    \makecell[c]{Ordinary} & \makecell[c]{$0 < i < j < \numstopsnu$} & \makecell[l]{\rule{0pt}{1.1cm}\rule[-.15cm]{0pt}{.15cm}\includegraphics[height=0.9cm,page=2]{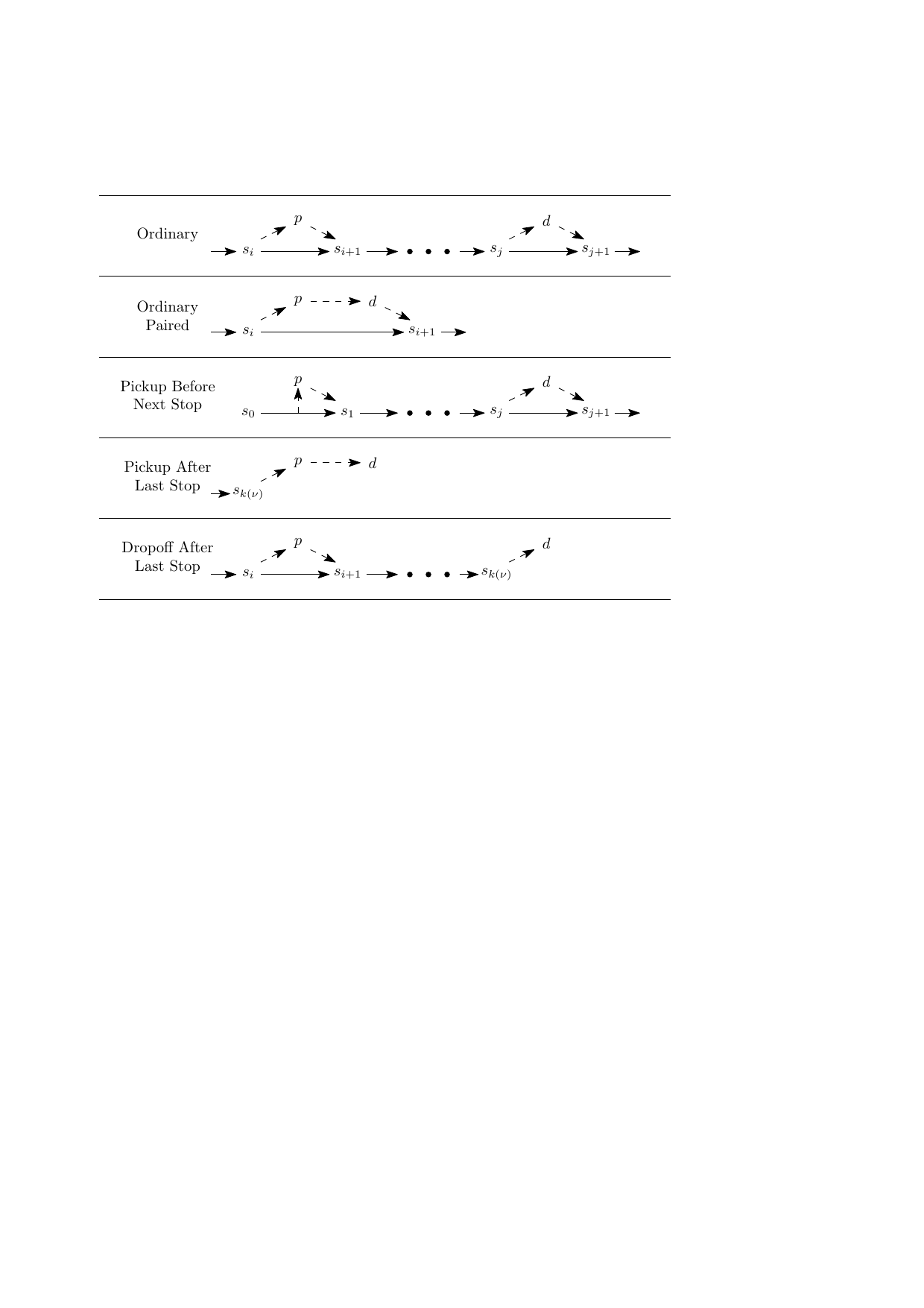}} \\
    \hline
    \makecell[c]{Ordinary\\Paired (OP)} & \makecell[c]{$0 < i = j < \numstopsnu$} & \makecell[l]{\rule{0pt}{1.1cm}\rule[-.15cm]{0pt}{.15cm}\includegraphics[height=0.9cm,page=3]{insertion_types.pdf}} \\
    \hline
    \makecell[c]{Pickup Before\\Next Stop (PBNS)} & \makecell[c]{$0 = i \le j < \numstopsnu$} & \makecell[l]{\rule{0pt}{1.1cm}\rule[-.15cm]{0pt}{.15cm}\includegraphics[height=0.9cm,page=4]{insertion_types.pdf}} \\
    \hline
    \makecell[c]{Pickup After\\Last Stop (PALS)} & \makecell[c]{$0 \le i = j = \numstopsnu$} & \makecell[l]{\rule{0pt}{1.1cm}\rule[-.15cm]{0pt}{.15cm}\includegraphics[height=0.9cm,page=5]{insertion_types.pdf}} \\
    \hline
    \makecell[c]{Dropoff After\\Last Stop (DALS)} & \makecell[c]{$0 < i < j = \numstopsnu$} & \makecell[l]{\rule{0pt}{1.1cm}\rule[-.15cm]{0pt}{.15cm}\includegraphics[height=0.9cm,page=6]{insertion_types.pdf}} \\
    \hline
  \end{tabular}
  \caption{Insertion types.
  Shows characterization of each type based on the pickup and dropoff insertion points $i$ and $j$ of an insertion $\iota=(r,p,d,\nu,i,j)$. 
  Illustrations depict the current route of $\nu$ (solid arrows) with stops $s \in R(\nu)$ as well as the detours to and from $p$ and $d$ (dashed lines).}
  \label{fig:insertion_types}
\end{figure*}
The \karri algorithm dynamically accepts requests and finds an insertion for each request that has optimal cost according to the cost function and current system state.

For a request $r$, the algorithm first finds the possible meeting points in a walking radius $\rho$ around the origin and destination using bounded Dijkstra searches.
Then, the algorithm evaluates all insertions in the order of types illustrated in~\cref{fig:insertion_types}.
For each insertion, \karri computes the cost according to the cost function (see~\cref{subsec:cost_function}).
The insertion with the smallest cost $\iota^\ast$ is repeatedly updated and eventually returned.

Since we consider sets of possible meeting points, the number of potential insertions becomes the main challenge of the algorithm.
In particular, we face the issue of computing the shortest paths between existing vehicle stops and each meeting point.
For this, we do not employ inexact filtering heuristics to reduce the number of necessary shortest path computations.
Instead, we find the insertion with optimal cost by applying many-to-many routing techniques to all combinations of vehicle stops and meeting points.

We engineer these routing techniques to act as filters of feasible insertions themselves by discarding sub-optimal candidates as early as possible during our searches using a variety of pruning methods. 
In the following sections, we describe these methods for each insertion type and associated shortest path query.


\section{Ordinary, Ordinary Paired, and Pickup Before Next Stop Insertions}
\label{sec:ordinary_op_and_pbns_insertions}
This section is concerned with \Ord, \OP, and \PBNS insertions (see~\cref{fig:insertion_types}).
In all three insertion types mentioned, both the pickup $p$ and the dropoff $d$ are inserted between two existing stops of the route $R(\nu)$.
To compute the cost of any insertion of one of these types, we need to know the distances between existing vehicle stops and the meeting points. 
BCH searches with elliptic pruning (see~\cref{subsec:LOUD}) have been shown to efficiently compute these distances~\cite{Buchhold2021}.
We call these \textit{elliptic BCH searches}.
Additionally, cost calculations for paired insertions require the PD-distance $\distveh(p,d)$.
In this section, we explain how we extend the required distance queries for multiple meeting points.

\subsection{Elliptic BCH Searches}
\label{subsec:elliptic_bch_searches}
We can extend elliptic BCH queries to multiple meeting points by simply repeating the queries for each meeting point.
Even with elliptic pruning, this can lead to impractical running times for large numbers of meeting points, though.
Therefore, we supplement elliptic BCH searches with two techniques for better scalability to larger numbers of meeting points. 
We describe these techniques only for pickups but they work analogously for dropoffs.

\parheader{Elliptic BCH Searches with Sorted Buckets}
\label{par:elliptic_bch_searches_with_sorted_buckets}
First, we propose using \emph{sorted buckets} to reduce the number of bucket entries scanned by BCH queries.
We explain the principle only for source buckets but it can be analogously applied for target buckets.

Recall that the constraints for existing riders of a vehicle $\nu$ define a leeway $\lambda(s_i, s_{i+1})$ for the detour between any two consecutive vehicle stops $s_i$ and $s_{i+1}$ of $\nu$ (see~\cref{subsec:LOUD}).
When an elliptic BCH search to a pickup $p$ scans a source bucket entry $(s, \dup(s_i,v)) \in \Bs(v)$, the tentative distance $\dup(s_i,v) + \tddown(v,p)$ can only lead to an insertion that holds all hard constraints if $\dup(s_i,v) + \tddown(v,p) \le \lambda(s_i, s_{i+1})$.
This means the entry is only relevant for $p$ if $\tddown(v,p) \le \lambda(s_i, s_{i+1}) - \dup(s_i,v)$.
We call $\remleeway(s_i, v) \gets \lambda(s_i,s_{i+1}) - \dup(s_i,v)$ the \emph{remaining leeway} of a source bucket entry $(s_i, \dup(s_i,v))$.

We sort the entries of each source bucket at each vertex $v$ by their remaining leeway in decreasing order.
Then, an elliptic BCH search to a pickup $p$ can stop scanning the entries at $v$ once an entry $(s, \dup(s,v))$ is scanned for which $\tddown(v,p) > \remleeway(s,v)$.
In this case, we have $\tddown(v,p) > \remleeway(s,v) \ge \remleeway(s',v)$ for any subsequent entry $(s', \dup(s',v))$, so the remaining entries cannot lead to any insertions that adhere to the hard constraints.
Maintaining the order of each bucket comprises a time overhead when inserting bucket entries.
However, since bucket sizes are small, this overhead is limited.
Note that sorted buckets can also be applied in the case of only a single pickup.

\parheader{Bundled Elliptic BCH Searches}
\label{par:bundled_elliptic_bch_searches}
Second, we employ \emph{bundled elliptic BCH searches} that exploit the locality of pickups.

Like any bundled search, a bundled elliptic BCH search is rooted at $k$ pickups and updates $k$ distances with each edge relaxation (see~\cref{par:bundled_searches}).
Additionally, we can bundle bucket entry scans.
Whenever a bucket entry for a stop $s$ is scanned, the bundled search tries to improve upon each of the $k$ tentative distances between $s$ and any of the $k$ pickups.  
We can effectively bundle the edge relaxations and bucket entry scans of elliptic BCH searches because the localized pickups share similar CH search spaces.  
Moreover, we can use vectorized instructions to parallelize both edge relaxations and bucket entry scans.
At the same time, elliptic pruning and sorted buckets can still be applied.
To our knowledge, our algorithm is the first to explicitly use bundled BCH searches. 
The idea follows from the bundled CH searches used in~\cite{Buchhold2019}.

\subsection{PD-Distance Searches}
\label{subsec:pd_distance_searches}
Computing the PD-distances, i.e. the distances between pickups and dropoffs, is a many-to-many shortest path problem where the set of sources and the set of targets are localized.

Our algorithm uses a BCH approach to address this problem.
We generate bucket entries for all dropoffs in their reverse CH search spaces.
Then, we run queries in the upward CH search graph rooted at each pickup to find the PD-distances using the dropoff bucket entries.
We propose two methods to improve these BCH searches.

Firstly, let $\maxPDDist$ be an  an upper bound on all PD-distances.
Then, we only have to generate and scan bucket entries in a radius of $\maxPDDist$. 
We use
\[
  \maxPDDist \gets \max_{p \in \Prho} \distveh(p, \orig) + \distveh(\orig,\dest) + \max_{d\in \Drho} \distveh(\dest, d)\text{.}
\]
We can compute $\distveh(p,\orig)$ for all $p \in \Prho$ and $\distveh(\dest, d)$ for all $d \in \Drho$ using two local Dijkstra searches rooted at $\orig$ and $\dest$, respectively.
We obtain $\distveh(\orig,\dest)$ with a single preliminary CH query. 

Secondly, we can once again use bundled BCH searches.
More specifically, we can generate bucket entries for batches of $k$ dropoffs and then run queries for batches of $k$ pickups where $k$ is a configuration parameter. 
Again, bundled PD-distance searches utilize the locality of pickups and dropoffs and allow us to employ SIMD parallelism.

\subsection{Enumerating Ordinary, Ordinary Paired, and Pickup Before Next Stop Insertions}
\label{subsec:enumerating_ordinary_op_and_pbns_insertions}
After running our elliptic BCH queries and PD-distance searches, we know all distances that are required for ordinary and \OP insertions.
We enumerate the insertions $\iota=(r,p,d,\nu,i,j)$ with $0 < i \le j < \numstopsnu$ for a set of candidate vehicles found by the elliptic BCH queries~\cite{Buchhold2021}.
We compute the cost $c(\iota)$ for each insertion and update $\iota^\ast$ to $\iota$ if $c(\iota) < c(\iota^\ast)$.

In a \emph{pickup before next stop (PBNS)} insertion $\ins=(r,p,d,\nu,0,j)$, the pickup $p$ is inserted between stops $s_0$ and $s_1$ of the vehicle.
This requires the vehicle to be redirected at its current location $\curloc(\nu)$ to drive to $p$ next instead of $s_1$ (cf.~\cref{fig:insertion_types}).
Thus, to compute the cost of a PBNS insertion, we need to know the distance $\distveh(\curloc(\nu), p)$.

In order to avoid finding $\distveh(\curloc(\nu),p)$ for every $\nu \in F$ and $p \in \Prho$, we employ a filtering technique proposed by~\citet{Buchhold2021}.
The technique exploits that $\distveh(s_0,p)$ is a lower bound on $\distveh(s_0,\curloc(\nu)) + \distveh(\curloc(\nu), p)$.
The distance $\distveh(s_0,p)$ can be computed by the elliptic BCH searches which means we can use it to compute a lower bound on the pickup detour as well as the cost of $\ins$.
If the lower bound cost of $\ins$ exceeds the best known cost, we can discard $\ins$.

Most of the time, this filter leaves us with a very small number of pairs of vehicle $\nu$ and pickup $p$ for which we actually need to compute $\distveh(\curloc(\nu), p)$.
\myalg uses a bucket based approach for the remaining necessary queries.
We generate source bucket entries for the current location of every affected vehicle and run bundled queries from the pickups.
The average number of such queries per request is less than $0.5$.


\section{Pickup After Last Stop Insertions}
\label{sec:pickup_after_last_stop_insertions}
In this section, we consider \PALS (PALS) insertions.
The main challenge of PALS insertions is the computation of the distances from last stops to pickups.
The authors of LOUD find that elliptic pruning is not applicable for the computation of these distances~\cite{Buchhold2021}. 
Instead, LOUD uses a reverse Dijkstra search rooted at $\orig$ that is stopped early when the search can no longer find an insertion better than the best known one.
For multiple pickups, we can compute the required distances by analogously running reverse Dijkstra searches for each pickup. 
These Dijkstra searches may also be bundled to exploit the locality of the pickups.

However, even with a single pickup, this Dijkstra search takes up a significant part of the running time of the LOUD algorithm. 
Thus, for a large number of pickups, we expect infeasible running times.
In this section, we introduce two new BCH based approaches for the computation of last stop distances.
For the rest of this section, let $\cmaxglobal$ denote an upper bound on the best known insertion cost (initially $\cmaxglobal \gets c(\iota^\ast)$).

\parheader{Reformulation of Cost Function for PALS Insertions}
Note that the cost of any PALS insertion $\iota=(r,p,d,\nu,\numstopsnu, \numstopsnu)$ is fully characterized by the pickup $p$, the PD-distance $\distveh(p,d)$, the walking distance $\distpsg(d)$, the departure time $\tdepmin(\laststopnu)$ of $\nu$ at $\laststopnu$, and the last stop distance $\distveh(\laststopnu,p)$. 
Thus, we can write the cost of $\iota$ as 
\[
  c(\ins) = c'(r,p,\distveh(p,d), \distpsg(d), \tdepmin(\laststopnu), \distveh(\laststopnu, p))\text{.}
\]
Importantly, the value of $c'$ monotonously increases with its last four arguments.
Furthermore, it will be helpful to define $\delta_{\text{pd}}^{\min} \gets \min_{p \in \Prho, d \in \Drho} \distveh(p,d)$ as a lower bound on any PD-distance.

\subsection{Last Stop BCH Searches for PALS}
\label{subsec:last_stop_bch_searches_for_PALS}
Even though elliptic pruning is not applicable, we can still employ a BCH search approach for distances from last stops to pickups.
For this, we maintain a \emph{last stop bucket} $\Blast(v)$ for every $v \in V$.
For every last stop $\laststopnu$, we generate an entry $(\laststopnu, \dup(\laststopnu,v)) \in \Blast(v)$ at each vertex $v$ in the upward CH search space rooted at $\laststopnu$.
Then, for every pickup $p \in \Prho$, we run an \emph{individual (last stop) BCH query} that explores the reverse CH search space $\Gdown_p$ rooted at $p$ and scans the last stop bucket at each settled vertex to compute the shortest path distances from last stops to $p$.
When the search scans an entry $(\laststopnu, \dup(\laststopnu,v)) \in \Blast(v)$, it tries to improve the tentative distance $\tdist(\laststopnu, p)$ with $\dup(\laststopnu, v) + \ddown(v,p)$.
Eventually, the shortest distance $\distveh(\laststopnu, p)$ is found for every last stop $\laststopnu$.

Whenever the last stop of a vehicle changes, we traverse the forward CH search spaces of the old and new last stop to remove all old bucket entries and insert new entries, respectively.
We stop each individual last stop BCH search as soon as the current distance no longer admits a PALS insertion with cost smaller than the best known cost.
Furthermore, we can bundle the BCH queries and use SIMD parallelism in a similar manner to bundled elliptic BCH searches (see~\cref{par:bundled_elliptic_bch_searches}).

\parheader{Pruning Bucket Scans using Sorted Buckets}
\label{par:cost_pruning_of_bucket_scans_using_sorted_buckets}
A remaining issue of this approach is the size of the last stop buckets. 
Without elliptic pruning, buckets contain many more entries, especially at vertices that have a high rank in the CH.
Therefore, the queries have to scan large numbers of bucket entries, rendering the last stop BCH approach inefficient.

The future work section of~\cite{Buchhold2021} suggests sorting the entries within each last stop bucket by their distance to address this issue.
Suppose an individual BCH query rooted at $p \in \Prho$ scans the bucket $\Blast(v)$.
For every entry $e = (\laststopnu, \dup(\laststopnu, v))$, let 
\[ 
  \cmin(e) \gets c'(r,p,\delta_{\text{pd}}^{\min}, 0, \treq(r), \dup(\laststopnu, v) + \ddown(v,p)) \text{.}
\]

We can stop each bucket scan early based on this lower bound:

\begin{theorem}
  Let the entries of $\Blast(v)$ be sorted by their upward distances in ascending order.
  Further, let $\cmaxglobal$ be an upper bound on the cost of the best insertion of the current request.
  Then, a BCH query rooted at pickup $p \in \Prho$ can stop scanning $\Blast(v)$ once it encounters an entry $e \in \Blast(v)$ with $\cmin(e) > \cmaxglobal$.
\end{theorem}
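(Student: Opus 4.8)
The plan is to show that applying the stopping rule is \emph{safe}: it never prevents the query rooted at $p$ from computing a last-stop distance that is needed to find an optimal PALS insertion. The argument rests on two observations, namely that $\cmin$ is monotone along a sorted bucket, and that $\cmin$ of an entry at the peak vertex realising a shortest path is a genuine lower bound on the cost of the best PALS insertion through the corresponding last stop.

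First I would establish monotonicity within $\Blast(v)$. For a fixed query rooted at $p$ and a fixed vertex $v$, the down-distance $\ddown(v,p)$ is identical for every entry of $\Blast(v)$, so the sixth argument $\dup(\laststopnu,v)+\ddown(v,p)$ of $\cmin$ is a constant offset of the upward distance $\dup(\laststopnu,v)$ by which the bucket is sorted in ascending order. Since $\cprime$ (and hence $\cmin$) increases monotonically in this last argument while its remaining arguments $r,p,\delta_{\text{pd}}^{\min},0,\treq(r)$ are the same for all entries, $\cmin$ is non-decreasing along the bucket. Consequently, once an entry $e$ with $\cmin(e)>\cmaxglobal$ is encountered, every later entry $e'$ satisfies $\cmin(e')\ge\cmin(e)>\cmaxglobal$, so the stopping condition, once triggered, holds for all remaining entries.

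Second I would prove the lower-bound property at shortest-path peaks. Fix the last stop $\laststopnu$ of some vehicle $\nu$ and let $v^\ast$ be a peak vertex with $\dup(\laststopnu,v^\ast)+\ddown(v^\ast,p)=\distveh(\laststopnu,p)$; such a vertex exists because the BCH obtains $\distveh(\laststopnu,p)$ as the minimum over up-down path lengths. For the entry $e^\ast$ of $\laststopnu$ at $v^\ast$, the sixth argument of $\cmin$ equals the true last-stop distance $\distveh(\laststopnu,p)$. Using the monotonicity of $\cprime$ in its last four arguments together with $\delta_{\text{pd}}^{\min}\le\distveh(p,d)$, $0\le\distpsg(d)$, and $\treq(r)\le\tdepmin(\laststopnu)$ (the last stop is a future stop, so the vehicle cannot depart it before the current request time $\treq(r)$), one obtains $\cmin(e^\ast)\le c(\ins)$ for the best PALS insertion $\ins$ using last stop $\laststopnu$, pickup $p$ and any dropoff $d$. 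Hence, whenever this best insertion has cost at most $\cmaxglobal$, the entry $e^\ast$ satisfies $\cmin(e^\ast)\le\cmaxglobal$.

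Combining the two steps yields safety. The computed shortest distance for $\laststopnu$ can only be affected by scanning its entry at the shortest-path peak $v^\ast$; an entry at any off-peak vertex contributes an up-down path no shorter than $\distveh(\laststopnu,p)$ and therefore never improves the tentative distance, so skipping it is immaterial. If $\laststopnu$ admits a PALS insertion of cost at most $\cmaxglobal$, the previous step gives $\cmin(e^\ast)\le\cmaxglobal$, and by the per-bucket monotonicity the scan of $\Blast(v^\ast)$ reaches $e^\ast$ before the stopping rule can fire, so the correct distance is recorded. Thus no entry relevant to an insertion of cost at most $\cmaxglobal$ is ever discarded, which is exactly the claim. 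I expect the main obstacle to be the direction of monotonicity in the distance argument: because $\cmin(e)$ uses the (possibly non-shortest) through-$v$ distance, it \emph{over}-estimates rather than lower-bounds the true cost at off-peak entries, so the lower-bound guarantee holds only at the shortest-path peak; the crux is recognising that this peak is precisely where correctness is needed and that off-peak entries cannot change the outcome.
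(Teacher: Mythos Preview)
Your proof is correct and follows essentially the same line as the paper's: both arguments rely on the monotonicity of $c'$ in its last four arguments, the ascending ordering of bucket entries by upward distance, and the fact that the entry at the shortest-path peak vertex yields the exact distance and hence a genuine lower bound on the PALS insertion cost. The paper packages this as a short proof by contradiction focused on a single bucket $\Blast(v)$ (assuming $v$ is the peak for the best insertion and deriving $c(\ins)>\cmaxglobal$), whereas you give a direct argument that first establishes bucket-wide monotonicity of $\cmin$, then proves the lower-bound property at the peak, and finally observes that off-peak entries are dispensable; the substance is the same.
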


\begin{proof}
  This can be shown by contradiction.
  Let $e = (\laststopnu, \dup(\laststopnu, v))$ with $\cmin(e) > \cmaxglobal$.
  Assume that the best insertion for request $r$ is the PALS insertion $\ins=(r,p,d,\nu', \numstops{\nu'}, \numstops{\nu'})$ and that $v$ is the highest ranked vertex on the only shortest up-down path from $\laststop{\nu'}$ to $p$.
  Then, $\distveh(\laststop{\nu'},p) = \dup(\laststop{\nu'}, v) + \ddown(v,p)$ and the shortest path will be found by the BCH query using an entry $e'=(\laststop{\nu'}, \dup(\laststop{\nu'}, v)) \in \Blast(v)$.

  If $\dup(\laststop{\nu'}, v) < \dup(\laststopnu, v)$, then $e'$ is scanned before $e$ and the shortest path from $\laststop{\nu'}$ to $p$ will be found.
  Otherwise, 
  \begin{align*}
      c(\ins) & = c'(r, p, \distveh(p,d), \distpsg(d), \tdepmin(\laststop{\nu'}), \distveh(\laststop{\nu'}, p)) \\ 
               & \ge c'(r, p, \delta_{\text{pd}}^{\min}, 0, \treq(r), \dup(\laststop{\nu'}, v) + \ddown(v,p)) \\
               & \ge c'(r, p, \delta_{\text{pd}}^{\min}, 0, \treq(r), \dup(\laststopnu, v) + \ddown(v,p)) \\
               & = \cmin(e) > \cmaxglobal 
  \end{align*} 
  Thus, $c(\ins)$ is larger than the upper bound $\cmaxglobal$ on the best insertion cost and $\ins$ cannot be the best insertion for $r$.
  Consequently, we do not have to scan any entries that come after $e$ in $\Blast(v)$.
\end{proof}

\parheader{Updating the Upper Bound Cost}
\label{par:pals_updating_upper_bound}
It is possible to simply use the cost of the best known insertion $c(\iota^\ast)$ for the upper bound cost $\cmaxglobal$ needed for cost pruning.
However, we can also improve $\cmaxglobal$ during the search.
Each tentative distance $\tdist(\laststopnu, p)$ found acts as an upper bound on the actual shortest distance $\distveh(\laststopnu,p)$. 
Thus, whenever the tentative distance $\tdist(\laststopnu, p)$ is updated, we can compute an upper bound
\[
  \cmax \gets c'(r,p,\distveh(p,\dest),0, \tdepmin(\laststopnu),\tdist(\laststopnu, p))
\]
on the cost of the best PALS insertion with $\nu$ and $p$. 
We update $\cmaxglobal$ to $\cmax$ if $\cmax < \cmaxglobal$.
This technique finds inexact upper bounds on the cost of the best PALS insertion early which is helpful for the stopping criterion of bucket scans.

\subsection{Collective Last Stop Searches for PALS}
\label{subsec:collective_last_stop_searches_for_PALS}
Finally, we propose a search approach based on the idea that we do not actually need to know the distance between every last stop and every pickup.
If we knew the best PALS insertion $\bestPALSins=(r,p,d,\nu,\numstopsnu,\numstopsnu)$ in advance, we would only need to find $\distveh(\laststopnu, p)$.
Obviously, we do not know $\bestPALSins$ in advance but we find that it is possible to prune the distance queries for individual pickups (or actually PD-pairs) by comparing the queries to each other whenever they meet at a vertex.
We introduce a collective BCH query that finds the best PALS insertion $\bestPALSins$ as well as the last stop distance $\distveh(\laststopnu, p)$.
In the following, we explain how labels representing PD-pairs are propagated through the CH search graph and how these labels can be pruned based on label domination.

\parheader{Open and Closed Labels}
\label{par:open_and_closed_labels}
A PD-pair label $(p,d,\ddown(v,p))$ at a vertex $v \in V$ consists of the pickup $p \in \Prho$, dropoff $d \in \Drho$ and downwards distance $\ddown(v,p)$. 
At each vertex $v \in V$, there is a set of \emph{open} labels $\open(v)$ and a set of \emph{closed} labels $\closed(v)$.
An open label is a label that has not been settled yet.
For each open label $l=(p,d,\ddown(v,p))$, we store a lower bound $\cmin(l)$ for the cost of a PALS insertion that can be found for $l$ in $\Gdown_v$
\[
  \cmin(l) \gets c'(r,p,\distveh(p,d),\distpsg(d),\treq(r),\ddown(v,p))
\]

\parheader{Algorithm Outline}
\label{par:collective_pals_algorithm_outline}
\begin{algorithm}[t!]
      \caption{Collective BCH search used to find distances from last stops to pickups.}
      \label{alg:collective_bch}
      \begin{algorithmic}[1]
      \State \textbf{Input:} $\Prho$, $\Drho$, $\Gdown = (\Vdown,\Edown)$, $\Blast(v)$ for $v \in V$
      \State \textbf{Output:} $(\bestp, \bestd)$ and $\distveh(\laststopnu, \bestp)$

      \Statex \phantom{empty line}

      \Procedure{\texttt{CollectiveBCH}}{} \label{alg:collective_bch:main_body}
        
        \State $Q \gets$ PQ of labels with $key_Q(l) = \cmin(l)$
        \State $\forall v \in V: \open(v) \gets \closed(v) \gets \emptyset$
        \State $\cmaxglobal \gets c(\iota^\ast)$

        \smallskip

        \ForEach{$(p,d) \in \Prho \times \Drho$} \label{alg:collective_bch:initial_labels}
          \State \texttt{insertLabelAtVertex}($p$, $(p,d,0)$)
        \EndFor
        \While{$Q \ne \emptyset$} \label{alg:collective_bch:main_loop_start}
          \State $l \gets Q$.\texttt{deleteMin}()
          \If{$\cmin(l) > \cmaxglobal$} \Return \EndIf
          \State \texttt{settleLabel}($l$) \label{alg:collective_bch:main_loop_end}
        \EndWhile
        \EndProcedure

        \Statex \phantom{empty line}

        \Procedure{\texttt{settleLabel}}{$l = (p,d,\ddown(v,p))$} \label{alg:collective_bch:settle_label}
        
          \State $\open(v)$.\texttt{remove}($l$); $\closed(v)$.\texttt{insert}($l$) \label{alg:collective_bch:mark_closed}

          \smallskip

          \ForEach{$e = (\laststop{\nu}, \dup(\laststop{\nu}, v)) \in \Blast(v)$} \label{alg:collective_bch:scan_entries_start}
            \If{$\cmin(l,e) > \cmaxglobal$} \textbf{break} \EndIf
            \If{$\cmax(l,e) < \cmaxglobal$}
              \State $(\bestp, \bestd) \gets (p,d)$; $\cmaxglobal \gets \cmax(l,e)$ \label{alg:collective_bch:scan_entries_end}
            \EndIf
          \EndFor

          \smallskip

          \ForEach{$(u,v) \in \Edown$} \label{alg:collective_bch:propagate_label_start} 
            \State $l' \gets (p,d,\ell^+(u,v) + \ddown(v,p))$
            \State \texttt{insertLabelAtVertex}($u$, $l'$) \label{alg:collective_bch:propagate_label_end}
          \EndFor

        \EndProcedure

        \Statex \phantom{empty line}

        \Procedure{\texttt{insertLabelAtVertex}}{$v$, $l'$} \label{alg:collective_bch:insertLabelAtVertex}
          \If {$\cmin(l') > \cmaxglobal$} \Return \EndIf \label{alg:collective_bch:cost_pruning}
          \ForEach{$l \in \open(v) \cup \closed(v)$} \label{alg:collective_bch:domination_pruning_start}
            \If{$l$ dominates $l'$} \Return \EndIf  \label{alg:collective_bch:domination_pruning_end}
          \EndFor
          \ForEach{$l \in \open(v)$}
            \If{$l'$ dominates $l$} $\open(v)$.\texttt{remove}($l$) \EndIf
          \EndFor
          \State $\open(v)$.\texttt{insert}($l'$); $Q$.\texttt{insert}($l'$)
        \EndProcedure
      \end{algorithmic}
\end{algorithm}
We give pseudocode for a collective BCH search in~\cref{alg:collective_bch}.
Our search maintains a priority queue $Q$ that contains all open labels ordered increasingly by $\cmin$.
Initially, at each pickup $p \in \Prho$, an open label $(p,d,0) \in \open(p)$ is created for each $d \in \Drho$ (line~\ref{alg:collective_bch:initial_labels}).
As long as $Q$ contains a label $l$ with $\cmin(l) \le \cmaxglobal$ for a known upper bound $\cmaxglobal$ on the cost of any insertion, our search proceeds with a next step (lines~\ref{alg:collective_bch:main_loop_start}-\ref{alg:collective_bch:main_loop_end}).
In each step of the search, the label $l \gets \min(Q)$ is removed from $Q$ and settled as described in the following.

\parheader{Settling Open Labels}
\label{par:settling_open_labels_for_collective_PALS}
Settling an open label $l = (p,d,\ddown(v,p))$ consists of three steps:
First, we mark $l$ closed at $v$, i.e. we move $l$ from $\open(v)$ to $\closed(v)$ (line~\ref{alg:collective_bch:mark_closed}).
Second, we search for a new best insertion by traversing all entries in the last stop bucket $\Blast(v)$ (lines~\ref{alg:collective_bch:scan_entries_start}-\ref{alg:collective_bch:scan_entries_end}).
For each entry $e = (\laststop{\nu}, \dup(\laststop{\nu}, v)) \in \Blast(v)$, we compute the tentative distance $\tdist(\laststopnu, p) = \dup(\laststopnu, v) + \ddown(v, p)$ and a cost upper bound
\begin{align*}
  \cmax(l,e) \gets c'( & r,p,\distveh(p,d),\distpsg(d), \\
  & \tdepmin(\laststopnu), \tdist(\laststopnu, p)).
\end{align*}
If $\cmax(l,e) < \cmaxglobal$, we mark $\iota=(r,p,d,\nu,\numstopsnu,\numstopsnu)$ as the best known PALS insertion, store the tentative distance $\tdist(\laststopnu, p)$, and update $\cmaxglobal \gets \cmax(l,e)$. 
Note that $\cmax(l,e)$ is the exact cost of the PALS insertion $\iota=(r,p,d,\nu,\numstopsnu,\numstopsnu)$ if $\tdist(\laststopnu, p)$ is a shortest path distance.
Since the BCH search finds shortest up-down paths, we will thus eventually find the best PALS insertion.
As before, we can stop each bucket scan early.
For this purpose, we compute a vehicle-independent cost lower bound $\cmin(l,e)$ s.t. we can stop the search early if $\cmin(l,e) > \cmaxglobal$ using
\begin{align*}
  \cmin(l,e) \gets  c'( & r,p,\distveh(p,d), \distpsg(d), \treq(r),  \\
  & \dup(\laststopnu, v) + \ddown(v,p)).
\end{align*}
Third, we propagate $l$ to all neighboring vertices of $v$.
For each neighboring vertex $w \in V$ with $(w,v) \in \Gdown$, we create a new open label $l' = (p,d,\ell^+(w,v) + \ddown(v,p))$ at $w$ (lines~\ref{alg:collective_bch:propagate_label_start}-\ref{alg:collective_bch:propagate_label_end}).
Here, we employ cost pruning by discarding $l'$ if the lower bound cost $\cmin(l')$ for this PD-pair and this distance exceeds $\cmaxglobal$ (line~\ref{alg:collective_bch:cost_pruning}).
Furthermore, we may be able to prune $l'$ at $v$ if it is dominated by an existing label at $v$ (lines~\ref{alg:collective_bch:domination_pruning_start}-\ref{alg:collective_bch:domination_pruning_end}) as described in the following.

\parheader{Domination Pruning}
\label{par:domination_pruning_for collective_PALS}
Propagating a label through the entire search space for every PD-pair is too expensive.
However, we find that we can compare labels at the same vertex and prune dominated labels in a technique we call \textit{domination pruning}.
Intuitively, a label $l$ dominates a label $l'$ at a vertex $v$ if we know that any insertion found in the reverse CH search space $\Gdown_v$ rooted at $v$ that uses $l'$ has higher costs than the equivalent insertion using $l$.

To formalize this, we first define an upper bound for the cost of a PALS insertion found in $\Gdown_v$ for a label $l$.
Let $l=(p,d,\ddown(v,p))$ be a PD-pair label at a vertex $v \in V$.
Let $w \in \Vdown_v$ and $e=(\laststopnu, \dup(\laststopnu, w)) \in \Blast(w)$.
Then,
\begin{align*}
  \cmax(l,v,e) \gets c'( & r,p,\distveh(p,d),\distpsg(d),\tdepmin(\laststopnu), \\
  & \dup(\laststopnu, w) + \ddown(w,v) + \ddown(v, p)) \text{.}
\end{align*}

With this, we can formally define the domination relation between labels:

\begin{Definition} \label{def:pd_pair_domination}
  A PD-pair label $l$ \emph{dominates} another label $l'$ at a vertex $v \in V$ exactly if $\cmax(l,v,e) < \cmax(l',v,e)$ for every $w \in \Vdown_v$ and $e \in \Blast(w)$.
\end{Definition}

\begin{theorem}
If a label $l$ dominates another label $l'$ at $v$, we do not need to settle $l'$ at $v$.
\end{theorem}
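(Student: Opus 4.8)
The plan is to argue by contradiction in the style of the preceding theorem: I assume that the globally optimal PALS insertion for the current request can only be found by settling $l'$ at $v$, and then exhibit a strictly cheaper PALS insertion that the search discovers through $l$ instead, contradicting optimality.

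First I would pin down exactly which insertions settling $l' = (p', d', \ddown(v, p'))$ at $v$ is responsible for. Propagating $l'$ downward through $\Gdown$ from $v$ reaches precisely the vertices $w \in \Vdown_v$, and at each such $w$ the search scans $\Blast(w)$ (lines~\ref{alg:collective_bch:scan_entries_start}--\ref{alg:collective_bch:scan_entries_end}). Hence the only insertions that could be lost by skipping $l'$ are those whose unique shortest up-down path from a last stop $\laststopnu$ to $p'$ runs through $v$; for such an insertion the apex (highest-ranked vertex) $w$ of that path satisfies $w \in \Vdown_v$, the relevant bucket entry is $e = (\laststopnu, \dup(\laststopnu, w)) \in \Blast(w)$, and, because the path is shortest, $\dup(\laststopnu, w) + \ddown(w, v) + \ddown(v, p')$ equals $\distveh(\laststopnu, p')$. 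Consequently the true cost of the best insertion reachable through $l'$ is exactly $\cmax(l', v, e)$ for this $w$ and $e$.

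Next I would bring in the domination hypothesis. By Definition~\ref{def:pd_pair_domination}, $\cmax(l, v, e) < \cmax(l', v, e)$ for every $w \in \Vdown_v$ and $e \in \Blast(w)$, in particular for the pair above. Since $c'$ is monotone in its last argument and $\dup(\laststopnu, w) + \ddown(w, v) + \ddown(v, p)$ is the length of some up-down path from $\laststopnu$ to the pickup $p$ of $l$, it upper-bounds $\distveh(\laststopnu, p)$; therefore $\cmax(l, v, e)$ upper-bounds the true cost of the PALS insertion that uses $p$, the dropoff $d$ of $l$, and the vehicle $\nu$ of $e$. Chaining these relations yields a concrete PALS insertion built from $l$ whose actual cost lies strictly below the best insertion reachable through $l'$.

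Finally I would observe that, by hypothesis, $l$ already lies in $\open(v) \cup \closed(v)$, so it is (or will be) settled and propagated into $\Gdown_v$; the cheaper insertion it certifies is thus genuinely examined by the search and updates $(\bestp, \bestd)$ accordingly. This contradicts the assumed optimality of the insertion found via $l'$, so skipping $l'$ loses no optimal insertion. I expect the main obstacle to be the careful handling of the \emph{upper-bound} nature of $\cmax$: the dominated term $\cmax(l', v, e)$ must be taken at the apex $w$ so that it coincides with a genuine insertion cost, while $\cmax(l, v, e)$ is used only as an upper bound, so that the strict inequality of Definition~\ref{def:pd_pair_domination} transfers to the true costs in the correct direction. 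A secondary check is that pruning $l'$ never weakens the $\cmaxglobal$-based stopping rules; but since discarding $l'$ only lets $\cmaxglobal$ attain the same or smaller values no later than in the unpruned run, the dominating insertion through $l$ is never itself pruned away.
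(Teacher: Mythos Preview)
Your proposal is correct and follows essentially the same contradiction argument as the paper: both identify the apex $w \in \Vdown_v$ on the shortest up-down path to the dominated pickup, observe that $\cmax(l',v,e)$ there equals the true cost of the supposed optimal insertion, and then use the domination inequality together with the monotonicity of $c'$ to exhibit a strictly cheaper PALS insertion $(r,p,d,\nu,\numstopsnu,\numstopsnu)$ via $l$. The paper's proof is slightly leaner in that it frames the contradiction purely as ``$\ins_2$ is the best PALS insertion'' and stops once a cheaper insertion \emph{exists}, whereas you additionally argue that the search actually discovers it and that $\cmaxglobal$-pruning cannot interfere; these extra checks are sound but not needed for the paper's version of the contradiction.
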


\begin{proof} \label{proof:pals_domination}
This can be shown by contradiction.
Assume $l_1=(p_1,d_1,\ddown(v,p_1))$ dominates $l_2=(p_2,d_2,\ddown(v,p_2))$ at $v$.
Further, assume that $\ins=(r,p_2,d_2,\nu,\numstopsnu,\numstopsnu)$ is the best PALS insertion.
Let $\pi$ be a shortest path from $\laststop{\nu}$ to $p_2$.
Wlog. $\pi$ is an up-down-path in the CH consisting of an upwards prefix $\pi^\uparrow$ and a downwards suffix $\pi^\downarrow$.
If $\pi^\downarrow$ does not contain $v$, then the collective search will not find $\pi$ in $\Gdown_v$, and we do not have to settle $l_2$ at $v$.

Otherwise, $\pi^\downarrow = (w, \dots, v, \dots, p_2)$ with $w \in \Vdown_v$.
Let $e = (\laststop{\nu}, \dup(\laststop{\nu}, w)) \in \Blast(w)$.
Since $\pi$ is a shortest path, we know that
\[
  \cmax(l_2,v,e) = c((r,p_2,d_2,\nu,\numstopsnu, \numstopsnu)) \text{.}
\]
However, $l_1$ dominates $l_2$ which means that
\begin{align*}
  c((r,p_1,d_1,\nu,\numstopsnu, \numstopsnu)) & \le \cmax(l_1,v,e) \\
  & < \cmax(l_2,v,e) \\
  & = c((r,p_2,d_2,\nu,\numstopsnu, \numstopsnu))
\end{align*}

This contradicts $\ins$ being the best PALS insertion. 
Hence, we do not have to settle label $l_2$ at $v$ to find the best pair for $\nu$.
\end{proof}

\parheader{Efficiently Computing the Domination Relation}
\label{par:efficiently_computing_pals_domination}
\begin{figure}
  \centering \includegraphics[width=0.4\textwidth]{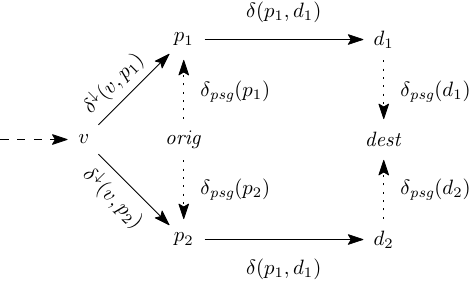}
  \caption{
  Depiction of a situation during a collective PALS search in which domination pruning may be applied for two labels $l_i=(p_i,d_i,\ddown(v,p_i))$ for $i=1,2$ at $v$.
  Solid arrows indicate known parts of potential vehicle routes and dotted arrows indicate walking routes.
  Dashed arrow symbolizes unknown vehicle route to $v$.
  }
  \label{fig:pickup_domination_relation}
\end{figure}

We find that it is not trivial to compute the domination relation efficiently because of the non-linearity of the cost function.

Consider the example depicted in~\cref{fig:pickup_domination_relation}.
Our algorithm needs to compute the domination relation between two labels $l_i=(p_i,d_i,\ddown(v,p_i))$ for $i=1,2$ at $v$.
For any bucket entry for a vehicle $\nu$ that we may scan in $\Gdown_v$, we need to decide whether $\nu$ should make the pickup and dropoff at $p_1$ and $d_1$ or $p_2$ and $d_2$.

In any case, $\nu$ passes the vertex $v$ on its way to the pickup.
Assume the vehicle arrives at $v$ at some time $t$.
We can then characterize the costs $c(\ins_i, t)$ of the tentative insertions $\ins_i=(r,p_i,d_i,\nu,\numstopsnu,\numstopsnu)$ for $i=1,2$ using $t$.
We define the components of our cost function (see~\cref{subsec:cost_function}) for this specific case:
\begin{align*}
  \tdetour(\ins_i, t) & = \tarr(\ins_i,t) - \tdepmin(\laststopnu) \\
  \ttrip(\ins_i, t) & = \tarr(\ins_i,t) + \distpsg(d_i) - \treq \\
  \ttripplus(\ins_i, t) & = 0 \\
  \twalk(\ins_i, t) & = \distpsg(p_i) + \distpsg(d_i)
\end{align*}

Here, the departure time at pickup $p_i$ and the arrival time at dropoff $d_i$ are defined as
\begin{align*}
  \tdep(p_i, t) & \gets \max \{ t + \ddown(v, p_i), \treq + \distpsg(p_i) \} \\
  \tarr(\ins_i,t) & \gets \tdep(p_i,t) + \distveh(p_i, d_i) \text{.}
\end{align*}

The constant term $\tdepmin(\laststopnu)$ in the detour time vanishes when we compute the cost difference $c(\ins_1,t)- c(\ins_2,t)$.
Thus, $t$ fully expresses the contribution of the vehicle and the cost difference is a function $\Delta_c(l_1,l_2,t)$ of only the two labels and $t$.
Then, if $\Delta_c(l_1, l_2, t) < 0$ for all $t \ge \treq$, every insertion found in $\Gdown_v$ will be better with $l_1$ than with $l_2$, i.e. $l_1$ dominates $l_2$.

If the cost function for PALS insertions were to grow linearly with $t$, then $\Delta_c(l_1,l_2,t)$ would be constant w.r.t. $t$.
In that case, we could simply check for domination using $\Delta_c(l_1,l_2,0) < 0$. 
However, the cost function does not increase linearly with $t$:
Firstly, the cost is constant w.r.t. $t$ as long as the vehicle arrives at $p_i$ earlier than the rider (see~\cref{subsec:vehicles_waiting_for_riders}).
If $t + \ddown(v,p_i) \le \treq + \distpsg(p_i)$, then $\nu$ will wait for the rider at $p_i$ for a certain wait time.
The resulting maximum operation in $\tdep(p_i, t)$ introduces a point of non-linearity.
Secondly, due to the wait time and trip time soft constraints, linear penalty terms are added to the cost function starting at a certain threshold for the wait time and trip time, both of which $t$ contributes to.

The rider arrival time at $p_i$ and the thresholds for soft constraint penalties differ between labels since $\distpsg(p_i)$, $\distpsg(d_i)$, and $\distveh(p_i, d_i)$ differ.
Thus, $\Delta_c(l_1,l_2,t)$ varies with $t$.   
Since we do not know which values of $t$ are possible for insertions found in $\Gdown_v$, we cannot trivially determine whether $l_1$ dominates $l_2$.

\parheader{Approximating the Domination Relation}
\label{par:approximating_the_domination_relation}
Instead, we under-approximate the domination relation by computing a sufficient precondition.
For this purpose, we find an upper bound $\deltacmax(l_1, l_2) \ge \max_{t \ge \treq} \Delta_c(l_1,l_2,t)$ on the difference in insertion costs between any insertion that can be found for $l_1$ and $l_2$ in $\Gdown_v$.
Then, $l_1$ dominates $l_2$ if $\deltacmax(l_1,l_2) < 0$. 

The cost upper bound is based on an upper bound on the difference in departure times at the pickup.
For any $t \ge \treq$, we have
\begin{align*}
  & \, \tdep(p_1,t) - \tdep(p_2,t) \\
  \le & \max \{ t + \ddown(v,p_1), t + \distpsg(p_1) \} - ( t + \ddown(v, p_2) ) \\
  = & \max \{ \ddown(v,p_1), \distpsg(p_1) \} - \ddown(v, p_2)\text{.}
\end{align*}
Thus, for any vehicle, the difference in the departure times at $p_1$ and $p_2$ is at most
\[ \deltatdepatpickup(l_1, l_2) \gets \max \{ \ddown(v,p_1), \distpsg(p_1) \} - \ddown(v,p_2)\text{.} \]
Then, for every insertion found in $\Gdown_v$, we have the following upper bounds on the difference in detours and trip times between $l_1$ and $l_2$: 
\begin{align*}
    \deltadetour(l_1,l_2) &\gets \deltatdepatpickup(l_1, l_2) + \distveh(p_1,d_1) - \distveh(p_2,d_2) \\
    \deltattrip(l_1,l_2) &\gets \deltadetour(l_1,l_2) + \distpsg(d_1) - \distpsg(d_2)
\end{align*}
The difference in penalties for violating the wait time and trip time soft constraints are also bounded:
\begin{align*}
    \deltawaitvio(l_1,l_2) &\gets \gammawait \max \{ \deltatdepatpickup(l_1, l_2), 0 \} \\
    \deltatripvio(l_1,l_2) &\gets \gammatrip \max \{ \deltattrip(l_1,l_2), 0 \}
\end{align*}
Note that the differences in detours and trip times are allowed to be negative to express a cost advantage for $l_1$ but the differences in penalties are not.
Even if $\deltatdepatpickup(l_1,l_2) < 0$ or $\deltattrip(l_1,l_2) < 0$, we may find insertions in $\Gdown_v$ where no penalties apply for either label.
To cover these cases, the penalty difference has to be non-negative.
Let $\Delta_{\textit{walk}} (l_1,l_2)$ be the fix difference in walking costs.
Putting it all together, we get
\begin{align*}
  \deltacmax(l_1,l_2) \gets & \deltadetour(l_1,l_2) + \\
                            & \tripweight \deltattrip(l_1,l_2) + \\
                            & \walkweight \Delta_{\textit{walk}}(l_1,l_2) + \\
                            & \deltawaitvio(l_1,l_2) + \deltatripvio(l_1,l_2)
\end{align*}

During a collective BCH search, we can compute $\deltacmax(l_1,l_2)$ in constant time with information that is known at $v$ without looking ahead in the search tree.
Since we under-approximate domination, it is possible that $l_1$ actually dominates $l_2$ but our condition does not hold.
However, we find that our domination criterion still manages to prune the vast majority of labels early.

\parheader{Limitations}
\label{par:limitations_of_collective_pals}
We remark that the insertion $\iota$ found by the collective search is only guaranteed to be the best possible PALS insertion if $\iota$ holds the service time hard constraint.
Since our search ignores the service time constraint, it may return an insertion that breaks the constraint even if there are other PALS insertions that do not.

Therefore, if $\iota$ breaks the service time constraint, we fall back to computing the distances from every last stop to every pickup using individual last stop BCH searches. 
The fallback individual BCH searches can make use of the good cost upper bounds found during the collective search.
We find that this is only necessary in exceedingly rare cases.


\section{Dropoff After Last Stop Insertions}
\label{sec:dropoff_after_last_stop_insertions}
A \emph{dropoff after last stop (DALS)} insertion $\ins=(r,p,d,\nu,i,\numstopsnu)$ inserts the dropoff (but not also the pickup) after the last stop of the vehicle's current route (cf.~\cref{fig:insertion_types}).
To compute the cost of a DALS insertion we need to compute the distance $\distveh(\laststopnu, d)$ from the vehicle's last stop to the dropoff.
This is similar to the shortest path problem in the PALS case.
We can utilize the approaches of bundled searches, BCH queries with sorted last stop buckets, and collective BCH queries with some minor differences.

Firstly, cost pruning is less effective than in the PALS case since the lower bounds on costs cannot include the PD-distance. 
Secondly, we cannot update the global cost upper bound $\cmaxglobal$ during the searches in the DALS case as we lack information about the cost of inserting the pickup earlier in the route.
Finally, collective BCH searches have some more intricate differences between the PALS and DALS cases. 
We go into more detail about these differences in the rest of this section.

\parheader{Collective BCH Searches for DALS}
Collective BCH searches for the DALS case propagate labels for individual dropoffs through the search graph.
Labels can be pruned based on a lower bound cost for each label or based on domination pruning.
Domination between dropoff labels is a partial relation defined as follows.

\begin{Definition}
  \label{def:dals_domination_relation}
  Let $l_z=(d_z,\ddown(v,d_z))$ for $z=1,2$ be two dropoff labels at $v \in V$.
  Let 
  \begin{align*}
    \Delta(l_1,l_2) \gets& \ddown(v,d_1) - \ddown(v,d_2) \\
    \deltatwalk(l_1,l_2) \gets& \distpsg(d_1) - \distpsg(d_2) \\
    \deltattrip(l_1,l_2) \gets& \Delta(l_1,l_2) + \deltatwalk(l_1,l_2)
  \end{align*}
  Then $d_1$ \emph{dominates} $d_2$ if 
  \begin{enumerate}
    \item $\Delta(l_1,l_2) + \tripweight \deltattrip(l_1,l_2) + \walkweight \deltatwalk(l_1,l_2) < 0$ and
    \item $\Delta(l_1,l_2) + (\tripweight + \gammatrip) \deltattrip(l_1,l_2) + \walkweight \deltatwalk(l_1,l_2) < 0$.
  \end{enumerate}
\end{Definition}

Assume that a label $l_1$ dominates another label $l_2$ at $v \in V$.
The domination relation makes sure that the cost for any DALS insertion $\ins=(r,p,d_z,\nu,i,\numstopsnu)$ found in $\Gdown_v$ will have smaller costs with $d_1$ than with $d_2$.
In particular, the two conditions for domination cover the two possibilities that the combination of a pickup $p$ and stop index $i$ do or do not lead to violations of the trip time soft constraint.

\begin{theorem}
  If a dropoff label $l_1$ dominates another label $l_2$ at $v$, we do not need to settle $l_2$ at $v$.
\end{theorem}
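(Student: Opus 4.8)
The plan is to argue by contradiction, reusing the up-down-path dissection from the PALS domination argument (cf.~\cref{def:pd_pair_domination}) but splitting the final cost comparison into the two regimes anticipated by the two inequalities of~\cref{def:dals_domination_relation}.

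First I would assume that $l_1=(d_1,\ddown(v,d_1))$ dominates $l_2=(d_2,\ddown(v,d_2))$ at $v$, yet the best DALS insertion is $\ins_2=(r,p,d_2,\nu,i,\numstopsnu)$ for some pickup $p$ and index $i$. I would take a shortest path $\pi$ from $\laststopnu$ to $d_2$ and write it as an upward prefix followed by a downward suffix $\pi^\downarrow$. If $\pi^\downarrow$ does not contain $v$, the collective search never reconstructs this path in $\Gdown_v$, so $l_2$ need not be settled at $v$; otherwise $\pi^\downarrow=(w,\dots,v,\dots,d_2)$ with $w\in\Vdown_v$, and fixing the last-stop entry $e=(\laststopnu,\dup(\laststopnu,w))\in\Blast(w)$ yields the exact identity $\distveh(\laststopnu,d_2)=\dup(\laststopnu,w)+\ddown(w,v)+\ddown(v,d_2)$.

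Next I would introduce the competitor $\ins_1=(r,p,d_1,\nu,i,\numstopsnu)$ that keeps $p$, $i$ and $\nu$ and only swaps the dropoff. The key point is that $\dup(\laststopnu,w)+\ddown(w,v)+\ddown(v,d_1)$ is merely an \emph{upper bound} on $\distveh(\laststopnu,d_1)$; since the DALS cost grows monotonically with the last-stop-to-dropoff distance, it suffices to show that the cost evaluated with this over-estimate already beats $c(\ins_2)$. I would then observe that, because the dropoff lies strictly after the last stop, the added trip time $\ttripplus$ for existing riders and the wait-time penalty $\cwaitvio$ are determined entirely by $p$ and $i$ and hence cancel, leaving only the detour, the new rider's trip time, the walking term and the trip-time penalty. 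With the over-estimated distance these differences collapse exactly to $\Delta(l_1,l_2)$ for the detour, $\deltattrip(l_1,l_2)$ for the trip time and $\deltatwalk(l_1,l_2)$ for the walking term (here I would invoke the DALS specializations of~\cref{def:residual_detour,def:trip_time}), so that
\[ c(\ins_1) - c(\ins_2) \le \Delta(l_1,l_2) + \tripweight\,\deltattrip(l_1,l_2) + \walkweight\,\deltatwalk(l_1,l_2) + \big(\ctripvio(\ins_1)-\ctripvio(\ins_2)\big). \]

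Finally I would bound the penalty difference $\ctripvio(\ins_1)-\ctripvio(\ins_2)$ in the two cases the definition foresees. If the over-estimated trip time of $\ins_1$ stays within $\ttripmax(r)$, then this difference is non-positive and condition~(1) of~\cref{def:dals_domination_relation} already makes the right-hand side negative; otherwise I would use $\max\{x,0\}\ge x$ to bound the penalty difference by $\gammatrip\,\deltattrip(l_1,l_2)$, reducing the right-hand side to the expression in condition~(2), which is again negative. Either way $c(\ins_1)<c(\ins_2)$, contradicting optimality of $\ins_2$. I expect the main obstacle to be exactly the non-linearity of the trip-time soft constraint: the $\max$ in $\ctripvio$ rules out a single linear domination test, which is why the definition carries two inequalities, one for each side of the penalty threshold, and the care in the proof lies in verifying that these two cases are exhaustive and that passing to the over-estimated distance does not disturb the cancellation of the shared cost terms.
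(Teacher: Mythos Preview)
Your proposal is correct and follows essentially the same contradiction argument as the paper: the up-down-path dissection, the key inequality $\distveh(\laststopnu,d_1)-\distveh(\laststopnu,d_2)\le\ddown(v,d_1)-\ddown(v,d_2)$, the cancellation of $\ttripplus$ and $\cwaitvio$, and the resulting bound on $c(\ins_1)-c(\ins_2)$ all match. The only cosmetic difference is the case split on the penalty term---you split on whether the (over-estimated) trip time of $\ins_1$ exceeds $\ttripmax(r)$, whereas the paper splits on the sign of $\ttrip(\ins_1)-\ttrip(\ins_2)$; both splits lead to the same two domination conditions and both are valid.
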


\begin{proof}
  This can be shown by contradiction.
  Assume $l_1 = (d_1, \ddown(v,d_1))$ dominates $l_2 = (d_2, \ddown(v,d_2))$ at $v \in V$.
  Further, assume that $\ins_2=(r,p,d_2,\nu,i,\numstopsnu)$ is the best DALS insertion.

  Let $\pi$ be a shortest path from $\laststopnu$ to $d_2$.
  Wlog. $\pi$ is an up-down path in the CH consisting of an upwards prefix $\pi^\uparrow$ and a downwards suffix $\pi^\downarrow$.
  If $\pi^\downarrow$ does not contain $v$, then the collective search will not find $\pi$ in $\Gdown_v$, and we do not have to settle $l_2$ at v.

  Otherwise, $\pi^\downarrow=(w,\dots,v,\dots,d_2)$ with $w \in \Vdown_v$.
  Since $\pi$ is a shortest path, we know that
  \begin{align*}
    \distveh(\laststopnu, d_2) &= \dup(\laststopnu, w) + \ddown(w,v) + \ddown(v,d_2) \text{ and } \\
    \distveh(\laststopnu, d_1) &\le \dup(\laststopnu,w) + \ddown(w,v) + \ddown(v,d_1) \text{.}
  \end{align*}
  Consequently, 
  \[ \distveh(\laststopnu, d_1) - \distveh(\laststopnu, d_2) \le \ddown(v,d_1) - \ddown(v,d_2) \: (\ast) \text{.}\]
  We consider the insertion $\ins_1=(r,p,d_1,\nu,i,\numstopsnu)$ and the cost difference $c(\ins_1) - c(\ins_2)$.
  Since the pickup detour of $\ins_1$ and $\ins_2$ are equal, we have $\ttripplus(\ins_1) = \ttripplus(\ins_2)$ and $\cwaitvio(\ins_1) = \cwaitvio(\ins_2)$.
  Comparing the cost of both insertions then yields
  \begin{align*}
    c(\ins_1) - c(\ins_2) 
                          = \; & \tdetour(\ins_1) - \tdetour(\ins_2) + \\
                            & \tripweight (\ttrip(\ins_1) - \ttrip(\ins_2)) + \\
                            & \walkweight (\twalk(\ins_1) - \twalk(\ins_2)) + \\
                            & \ctripvio(\ins_1) - \ctripvio(\ins_2) \\
                        \overset{(\ast)}{\le} \; & \Delta(l_1, l_2) + \\
                            & \tripweight \deltattrip(l_1, l_2) + \\
                            & \walkweight \deltatwalk(l_1,l_2) + \\
                            & \ctripvio(\ins_1) - \ctripvio(\ins_2)
  \end{align*}

  In case $\ttrip(\ins_1) - \ttrip(\ins_2) \le 0$, we have the upper bound $\ctripvio(\ins_1) - \ctripvio(\ins_2) \le 0$.
  Otherwise, we get the upper bound $\ctripvio(\ins_1) - \ctripvio(\ins_2) \le \gammatrip \deltattrip(l_1, l_2)$.
  Since $l_1$ dominates $l_2$, both cases yield $c(\ins_1) - c(\ins_2) < 0$.
  This contradicts $\ins_2$ being the best DALS insertion.
  Hence, we do not have to settle $l_2$ at $v$.
\end{proof}

For each vehicle $\nu \in F$, the collective BCH search finds a set of dropoffs $D(\nu)$ and the distances $\distveh(\laststopnu, d)$ for $d \in D(\nu)$ s.t. $D(\nu)$ contains the best dropoff for every possible combination of pickup $p \in \Prho$ and stop index $0 \le i < \numstopsnu$.

There is not necessarily a single best dropoff for each vehicle as different combinations of $p$ and $i$ may or may not lead to a violation of the trip time soft constraint.
The possible penalty term in the cost function means that the trip time may be differently weighted for different $p$ and $i$.
In particular, the dropoff walking time $\distpsg(d)$ may have a larger impact on the insertion cost for some combinations of $p$ and $i$ and a smaller impact for others.
In effect, $D(\nu)$ is a set of dropoffs that are pareto-optimal for $\nu$ w.r.t. the costs of insertions with and without trip time penalties.
This is also the reason why the domination relation for dropoff labels is a partial relation.

We find that the sets $D(\nu)$ remain very small.
The average size of $D(\nu)$ for vehicles $\nu \in F$ with $D(\nu) \ne \emptyset$ is only around $1.15$ for our tested instances.

\section{Experimental Evaluation}
\label{sec:experimental_evaluation}
\begin{table*}[t]
  \centering
  \caption{Key figures of our benchmark instances.
  Shows number of vertices ($|V|$), edges ($|E|$), vehicles (\#veh.), and requests (\#req.).
  Additionally, shows average number (rounded down) of pickups ($\numpickups$) and dropoffs ($\numdropoffs$) for walking radius $\rho \in \{0\s, 150\s, 300\s, 450\s, 600\s \}$ on the \BerlinOne and \BerlinTen instances, and $\rho \in \{ 0\s, 300\s, 600\s \}$ on the \RuhrOne and \RuhrTen instances.}
  \begin{tabular}{cRRRRRRRRRRRRRR}
    \toprule
    \multicolumn{5}{c}{} & \multicolumn{2}{c}{$\rho=0\s$} & \multicolumn{2}{c}{$\rho=150\s$} & \multicolumn{2}{c}{$\rho=300\s$} & \multicolumn{2}{c}{$\rho=450\s$} & \multicolumn{2}{c}{$\rho=600\s$} \\
    \cmidrule(lr){6-7}\cmidrule(lr){8-9}\cmidrule(lr){10-11}\cmidrule(lr){12-13}\cmidrule(l){14-15}
    Instance & \makecell[c]{|V|} & \makecell[c]{|E|} & \makecell[c]{\#\text{veh.}} & \makecell[c]{\#\text{req.}} & \makecell[c]{$\numpickups$} & \makecell[c]{$\numdropoffs$}  & \makecell[c]{$\numpickups$} & \makecell[c]{$\numdropoffs$} & \makecell[c]{$\numpickups$} & \makecell[c]{$\numdropoffs$}  & \makecell[c]{$\numpickups$} & \makecell[c]{$\numdropoffs$} & \makecell[c]{$\numpickups$} & \makecell[c]{$\numdropoffs$} \\
    \midrule
    \ShortBerlinOne & 94422 & 193212 & 1000  & 16569 & 1 & 1 & 12 & 12 & 44 & 44 & 100 & 99 & 178 & 178 \\
    \midrule
    \ShortBerlinTen & 94422 & 193212 & 10000 & 149185 & 1 & 1 & 13 & 13 & 46 & 46 & 103 & 104 & 183 & 186 \\
    \midrule
    \ShortRuhrOne & 420700 & 887790 & 3000 & 49707 & 1 & 1 & & & 40 & 39 & & & 137 & 136 \\
    \cmidrule(r){1-7}\cmidrule(lr){10-11}\cmidrule(l){14-15}
    \ShortRuhrTen & 420700 & 887790 & 30000 & 447555 & 1 & 1 & & & 40 & 39 & & & 137 & 136 \\
    \bottomrule
  \end{tabular}
  \label{tab:key_figures_of_instances}
\end{table*}
Our source code\footnote{Available at \url{https://github.com/molaupi/karri}.} is written in C++17 and compiled with GCC 9.4 using \texttt{-O3}.
We run our experiments on a machine with Ubuntu 20.04, 512 GiB of memory and two 16-core Intel Xeon E5-2683 v4 processors at 2.1GHz.
We use 32-bit distance labels and the AVX2 SIMD instruction set with 256-bit registers to compute up to $8$ operations in one vector instruction.

We evaluate \myalg on the \BerlinOne (\ShortBerlinOne), \BerlinTen (\ShortBerlinTen), \RuhrOne (\ShortRuhrOne), and \RuhrTen (\ShortRuhrTen) request sets~\cite{Buchhold2021} that respectively represent 1\% and 10\% of taxi sharing demand in the Berlin and Rhein-Ruhr metropolitan areas on a weekday. 
The request sets for Berlin were artificially generated using the Open Berlin Scenario~\cite{Ziemke2019} for the MATSim transport simulation~\cite{Horni2016}%
\footnote{MATSim generates realistic demand data but considering more than 10\% of taxi sharing demand would take processing times in the order of multiple months. 
For details, see~\cite{Buchhold2021}.}%
.
For the Rhein-Ruhr request sets, the request times as well as the pickup and dropoff locations are randomly drawn according to distributions that lead to similar trip times and request density as the Berlin request sets (for details, see~\cite{Buchhold2021}). 
The underlying road networks are obtained from OpenStreetMap data\footnote{\url{https://download.geofabrik.de/}, accessed Oct 30th 2023.}.
We use the known speed limit of each road to determine the travel time of the according edge in the vehicle network.
For the passenger network, we assume a constant walking speed of $4.5$km/h. 
We use the open-source library RoutingKit\footnote{\url{https://github.com/RoutingKit}, accessed Oct 30th 2023.} to compute the contraction hierarchies of the road networks which takes less than a minute for our instances.

We consider walking as a mode of local transportation for riders.
We scale the number of pickups $\numpickups$ and dropoffs $\numdropoffs$ by using increasing walking radii $\rho \in \{0\s, 150\s, 300\s, 450\s, 600\s\}$.
We show the number of vertices, edges, vehicles, and requests as well as the numbers of pickups and dropoffs for different walking radii in~\cref{tab:key_figures_of_instances}. 
We run five iterations of every experiment, and report average running times. 

For our cost function (see~\cref{eq:cost_function}), we adopt a basic ``time is money'' approach.
We use $\tripweight = 1$ to weight the time of a driver and a rider equally.
By setting $\omega=0$ we do not penalize walking over driving.
This choice maximizes the effect of meeting points on vehicle detours. 
In accordance with the MATSim transport simulation, we choose $\alpha=1.7$ and $\beta=2\text{min}$ which means that each trip may take up to a maximum trip time of $1.7 \distveh(\orig, \dest) + 2\text{min}$.
For the remaining parameters, we choose $\twaitmax = 600\s$, $\gammawait = 1$, and $\gammatrip = 10$.

\subsection{Bundled Searches}
\label{subsec:bundled_searches_experiments}
\begin{figure*}[t]
  \centering
    \input{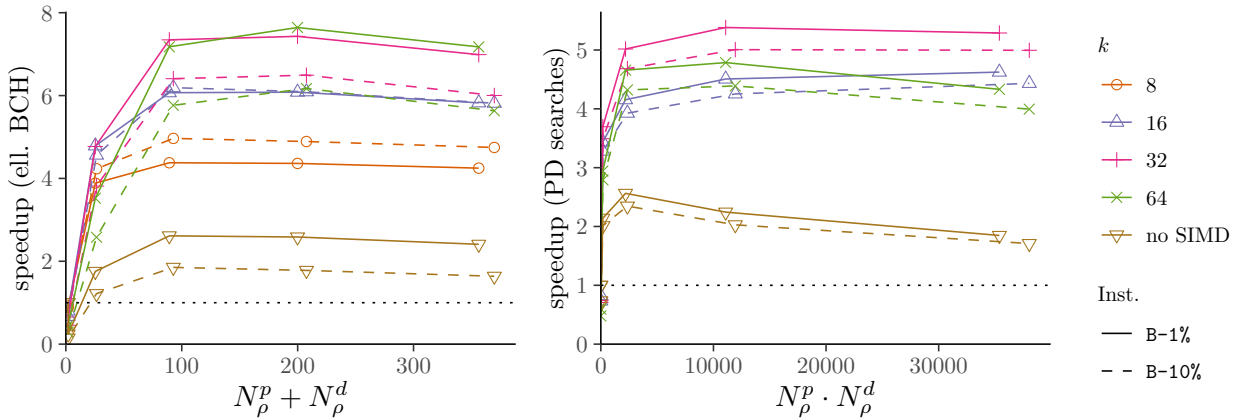}%

  \vspace*{-0.8cm}
  \caption{Mean speedups for bundling with SIMD instructions for elliptic BCH searches (left) and PD-distance searches (right) on the \BerlinOne and \BerlinTen instances with $\rho \in \{0\s, 150\s, 300\s, 450\s, 600\s\}$.
  Considers $k \in \{8,16,32,64\}$ for elliptic BCH searches and $k \in \{16,32,64\}$ for PD-distance searches.
  Additionally shows running times without SIMD instructions with $k=32$.
  Note the different $y$-axes.}
  \label{fig:ell_bch_and_pd_bundled_eval_plots}
\end{figure*}
\begin{figure*}[t]
  \centering
    \input{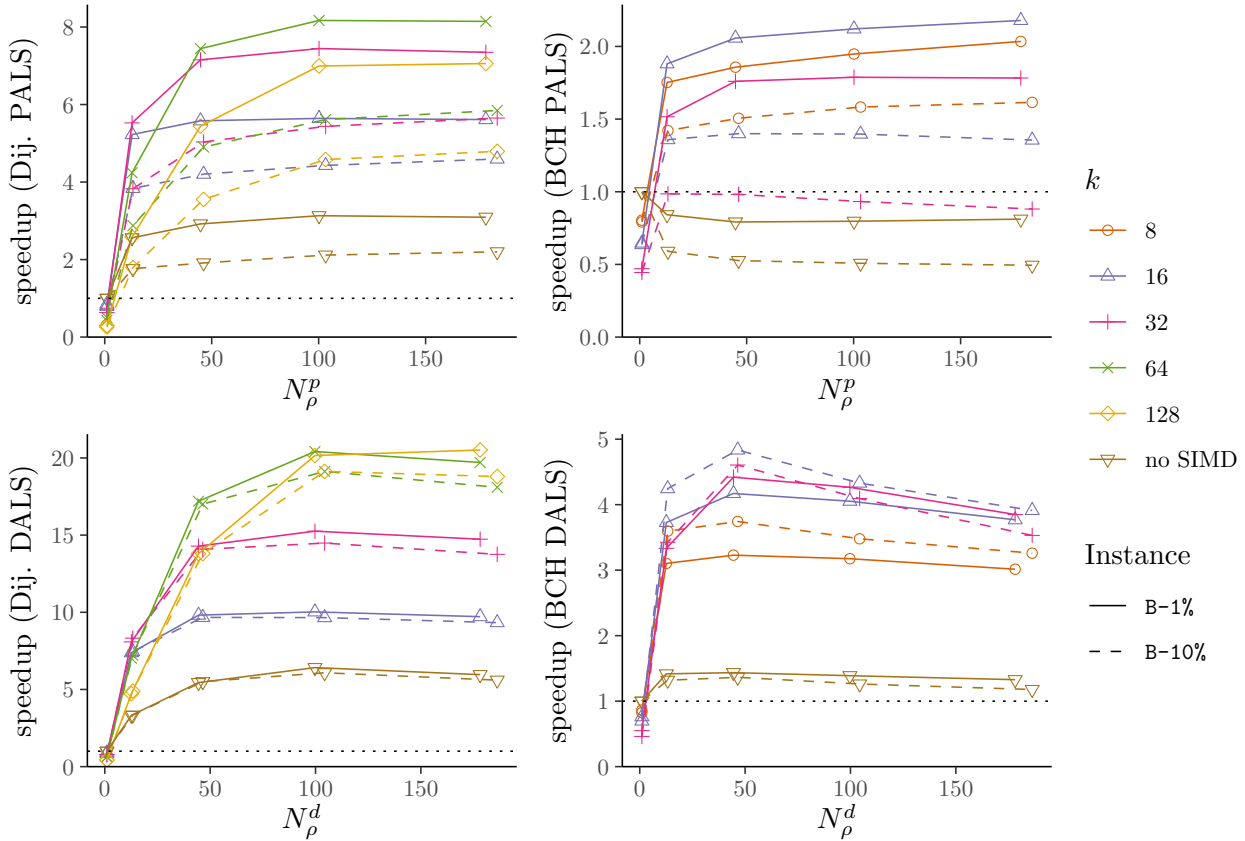}%

  \vspace*{-0.8cm}
  \caption{Mean speedups for bundling with SIMD instructions for Dijkstra searches (left) and individual BCH searches (right) in the PALS (top) and DALS (bottom) cases on the \BerlinOne and \BerlinTen instances with $\rho \in \{0\s, 150\s, 300\s, 450\s, 600\s\}$.
  Considers $k \in \{16,32,64,128\}$ for Dijkstra searches and $k \in \{8,16,32\}$ for BCH searches.
  Additionally shows speedups for bundling without SIMD instructions for Dijkstra searches with $k=64$ and BCH searches with $k=8$.
  Note the different $y$-axes.}
  \label{fig:last_stop_bundled_eval_plots}
\end{figure*}
In this section, we experimentally evaluate bundled searches in each of the described applications and find the optimal value of $k$ for each of them.
We conduct our experiments on the \BerlinOne and \BerlinTen instances with $\rho \in \{0\s, 150\s,300\s,450\s,600\s\}$.

\parheader{Bundled Elliptic BCH Searches and PD-Distance Searches}
We show experimental speedups for bundled elliptic BCH searches and bundled PD-distance searches in~\cref{fig:ell_bch_and_pd_bundled_eval_plots}.

For both search types, we find that bundled searches with $k=32$ lead to good speedups of between $5$ and $7$ with vector instructions and between $1.6$ and $2.4$ without vector instructions.
As both search types explore the entire CH search space of each source, a lot of work is performed in the periphery of sources.
Since the sources are close to one another, their search trees grow identical at larger distances which enables effective bundling.
However, larger values of $k$ lead to overheads for bundled edge relaxations and bucket entry scans closer to the sources that may not be bundled well.
The value $k=32$ strikes a balance between these two aspects.  

\parheader{Bundled Last Stop Searches}
We depict speedups for bundled Dijkstra searches and individual BCH searches for the PALS and DALS cases in~\cref{fig:last_stop_bundled_eval_plots}.

We find that Dijkstra searches are well suited for bundling as we observe the smallest search times with $k=64$ or in some cases even $k=128$.
Since Dijkstra searches do not use shortcut edges, the searches for each individual source meet much earlier than BCH searches.
Thus, the vast majority of the large number of edge relaxations of Dijkstra searches can be bundled well.
This is evidenced by the fact that we see good speedups for bundled Dijkstra searches even without SIMD instructions.
Larger $k > 64$ may be useful for larger numbers of sources but eventually we will run into cache limitations as hundreds of bytes of distance labels need to be handled per vertex. 

Contrarily, individual last stop BCH searches cannot be bundled as well due to two opposing properties:
Firstly, most work is performed close to the sources.
With sorted buckets, more bucket entries are scanned at vertices closer to the sources.
Additionally, the cost based stopping criterion of last stop BCH searches limits the search radius.
Secondly, due to the usage of shortcut edges, the search trees of individual searches only overlap at larger distances from the sources. 
Thus, edge relaxations and bucket entry scans cannot be bundled well in the proximity of the sources.
In effect, most work performed by individual last stop BCH searches is not well suited for bundling.

These factors have a stronger impact in the PALS case, as the stopping criterion is more effective (see~\cref{sec:dropoff_after_last_stop_insertions}).
Thus, in the PALS case, bundling only achieves speedups of $2.17$ for $k=16$ on \BerlinOne and $1.62$ for $k=8$ on \BerlinTen.
In fact, bundled searches without vector instructions are slower than non-bundled searches in the PALS case.
In the DALS case, a larger search radius is explored which allows better bundling with speedups of $3.77$ and $3.99$ for $k=16$ on \BerlinOne and \BerlinTen, respectively.

\subsection{Sorted Buckets}
\label{subsec:sorted_buckets_experiments}
\begin{figure}[t]
  \centering
\begin{tikzpicture}[x=1pt,y=1pt]
\definecolor{fillColor}{RGB}{255,255,255}
\path[use as bounding box,fill=fillColor,fill opacity=0.00] (0,0) rectangle (234.88,162.61);
\begin{scope}
\path[clip] (  0.00,  0.00) rectangle (234.88,162.61);
\definecolor{drawColor}{RGB}{255,255,255}
\definecolor{fillColor}{RGB}{255,255,255}

\path[draw=drawColor,line width= 0.6pt,line join=round,line cap=round,fill=fillColor] (  0.00,  0.00) rectangle (234.88,162.61);
\end{scope}
\begin{scope}
\path[clip] ( 34.16, 30.69) rectangle (163.89,157.11);
\definecolor{fillColor}{RGB}{255,255,255}

\path[fill=fillColor] ( 34.16, 30.69) rectangle (163.89,157.11);
\definecolor{drawColor}{RGB}{117,112,179}

\path[draw=drawColor,line width= 0.4pt,line join=round,line cap=round] ( 32.86,101.41) rectangle ( 36.79,105.33);

\path[draw=drawColor,line width= 0.4pt,line join=round,line cap=round] ( 40.69, 98.76) rectangle ( 44.61,102.68);

\path[draw=drawColor,line width= 0.4pt,line join=round,line cap=round] ( 62.02, 98.00) rectangle ( 65.95,101.92);

\path[draw=drawColor,line width= 0.4pt,line join=round,line cap=round] ( 99.00, 97.27) rectangle (102.92,101.20);

\path[draw=drawColor,line width= 0.4pt,line join=round,line cap=round] (151.27, 97.44) rectangle (155.19,101.37);

\path[draw=drawColor,line width= 0.4pt,line join=round,line cap=round] ( 32.86,133.31) rectangle ( 36.79,137.23);

\path[draw=drawColor,line width= 0.4pt,line join=round,line cap=round] ( 41.03,137.43) rectangle ( 44.96,141.35);

\path[draw=drawColor,line width= 0.4pt,line join=round,line cap=round] ( 63.18,143.89) rectangle ( 67.11,147.81);

\path[draw=drawColor,line width= 0.4pt,line join=round,line cap=round] (101.57,148.11) rectangle (105.50,152.03);

\path[draw=drawColor,line width= 0.4pt,line join=round,line cap=round] (155.75,149.13) rectangle (159.67,153.05);

\path[draw=drawColor,line width= 0.6pt,line join=round] ( 34.82,103.37) --
	( 42.65,100.72) --
	( 63.99, 99.96) --
	(100.96, 99.24) --
	(153.23, 99.41);

\path[draw=drawColor,line width= 0.6pt,dash pattern=on 2pt off 2pt ,line join=round] ( 34.83,135.27) --
	( 42.99,139.39) --
	( 65.14,145.85) --
	(103.53,150.07) --
	(157.71,151.09);
\definecolor{drawColor}{RGB}{0,0,0}

\path[draw=drawColor,line width= 0.6pt,dash pattern=on 1pt off 3pt ,line join=round] ( 34.16, 92.53) -- (163.89, 92.53);
\end{scope}
\begin{scope}
\path[clip] (  0.00,  0.00) rectangle (234.88,162.61);
\definecolor{drawColor}{RGB}{0,0,0}

\path[draw=drawColor,line width= 0.6pt,line join=round] ( 34.16, 30.69) --
	( 34.16,157.11);
\end{scope}
\begin{scope}
\path[clip] (  0.00,  0.00) rectangle (234.88,162.61);
\definecolor{drawColor}{gray}{0.30}

\node[text=drawColor,anchor=base east,inner sep=0pt, outer sep=0pt, scale=  0.88] at ( 29.21, 27.66) {0.0};

\node[text=drawColor,anchor=base east,inner sep=0pt, outer sep=0pt, scale=  0.88] at ( 29.21, 58.58) {0.5};

\node[text=drawColor,anchor=base east,inner sep=0pt, outer sep=0pt, scale=  0.88] at ( 29.21, 89.50) {1.0};

\node[text=drawColor,anchor=base east,inner sep=0pt, outer sep=0pt, scale=  0.88] at ( 29.21,120.42) {1.5};

\node[text=drawColor,anchor=base east,inner sep=0pt, outer sep=0pt, scale=  0.88] at ( 29.21,151.35) {2.0};
\end{scope}
\begin{scope}
\path[clip] (  0.00,  0.00) rectangle (234.88,162.61);
\definecolor{drawColor}{gray}{0.20}

\path[draw=drawColor,line width= 0.6pt,line join=round] ( 31.41, 30.69) --
	( 34.16, 30.69);

\path[draw=drawColor,line width= 0.6pt,line join=round] ( 31.41, 61.61) --
	( 34.16, 61.61);

\path[draw=drawColor,line width= 0.6pt,line join=round] ( 31.41, 92.53) --
	( 34.16, 92.53);

\path[draw=drawColor,line width= 0.6pt,line join=round] ( 31.41,123.45) --
	( 34.16,123.45);

\path[draw=drawColor,line width= 0.6pt,line join=round] ( 31.41,154.38) --
	( 34.16,154.38);
\end{scope}
\begin{scope}
\path[clip] (  0.00,  0.00) rectangle (234.88,162.61);
\definecolor{drawColor}{RGB}{0,0,0}

\path[draw=drawColor,line width= 0.6pt,line join=round] ( 34.16, 30.69) --
	(163.89, 30.69);
\end{scope}
\begin{scope}
\path[clip] (  0.00,  0.00) rectangle (234.88,162.61);
\definecolor{drawColor}{gray}{0.20}

\path[draw=drawColor,line width= 0.6pt,line join=round] ( 34.16, 27.94) --
	( 34.16, 30.69);

\path[draw=drawColor,line width= 0.6pt,line join=round] ( 67.58, 27.94) --
	( 67.58, 30.69);

\path[draw=drawColor,line width= 0.6pt,line join=round] (101.00, 27.94) --
	(101.00, 30.69);

\path[draw=drawColor,line width= 0.6pt,line join=round] (134.42, 27.94) --
	(134.42, 30.69);
\end{scope}
\begin{scope}
\path[clip] (  0.00,  0.00) rectangle (234.88,162.61);
\definecolor{drawColor}{gray}{0.30}

\node[text=drawColor,anchor=base,inner sep=0pt, outer sep=0pt, scale=  0.88] at ( 34.16, 19.68) {0};

\node[text=drawColor,anchor=base,inner sep=0pt, outer sep=0pt, scale=  0.88] at ( 67.58, 19.68) {100};

\node[text=drawColor,anchor=base,inner sep=0pt, outer sep=0pt, scale=  0.88] at (101.00, 19.68) {200};

\node[text=drawColor,anchor=base,inner sep=0pt, outer sep=0pt, scale=  0.88] at (134.42, 19.68) {300};
\end{scope}
\begin{scope}
\path[clip] (  0.00,  0.00) rectangle (234.88,162.61);
\definecolor{drawColor}{RGB}{0,0,0}

\node[text=drawColor,anchor=base,inner sep=0pt, outer sep=0pt, scale=  1.10] at ( 99.02,  7.64) {$\numpickups + \numdropoffs$};
\end{scope}
\begin{scope}
\path[clip] (  0.00,  0.00) rectangle (234.88,162.61);
\definecolor{drawColor}{RGB}{0,0,0}

\node[text=drawColor,rotate= 90.00,anchor=base,inner sep=0pt, outer sep=0pt, scale=  1.10] at ( 13.08, 93.90) {speedup (ell. BCH)};
\end{scope}
\begin{scope}
\path[clip] (  0.00,  0.00) rectangle (234.88,162.61);
\definecolor{fillColor}{RGB}{255,255,255}

\path[fill=fillColor] (174.89,106.62) rectangle (229.38,147.29);
\end{scope}
\begin{scope}
\path[clip] (  0.00,  0.00) rectangle (234.88,162.61);
\definecolor{drawColor}{RGB}{0,0,0}

\node[text=drawColor,anchor=base west,inner sep=0pt, outer sep=0pt, scale=  1.10] at (180.39,133.15) {Config.};
\end{scope}
\begin{scope}
\path[clip] (  0.00,  0.00) rectangle (234.88,162.61);
\definecolor{drawColor}{RGB}{117,112,179}

\path[draw=drawColor,line width= 0.4pt,line join=round,line cap=round] (185.65,117.39) rectangle (189.58,121.31);
\end{scope}
\begin{scope}
\path[clip] (  0.00,  0.00) rectangle (234.88,162.61);
\definecolor{drawColor}{RGB}{117,112,179}

\path[draw=drawColor,line width= 0.6pt,line join=round] (181.83,119.35) -- (193.40,119.35);
\end{scope}
\begin{scope}
\path[clip] (  0.00,  0.00) rectangle (234.88,162.61);
\definecolor{drawColor}{RGB}{0,0,0}

\node[text=drawColor,anchor=base west,inner sep=0pt, outer sep=0pt, scale=  0.88] at (200.34,116.32) {sorted};
\end{scope}
\begin{scope}
\path[clip] (  0.00,  0.00) rectangle (234.88,162.61);
\definecolor{fillColor}{RGB}{255,255,255}

\path[fill=fillColor] (174.89, 40.50) rectangle (228.94, 95.62);
\end{scope}
\begin{scope}
\path[clip] (  0.00,  0.00) rectangle (234.88,162.61);
\definecolor{drawColor}{RGB}{0,0,0}

\node[text=drawColor,anchor=base west,inner sep=0pt, outer sep=0pt, scale=  1.10] at (180.39, 81.48) {Inst.};
\end{scope}
\begin{scope}
\path[clip] (  0.00,  0.00) rectangle (234.88,162.61);
\definecolor{drawColor}{RGB}{0,0,0}

\path[draw=drawColor,line width= 0.6pt,line join=round] (181.83, 67.68) -- (193.40, 67.68);
\end{scope}
\begin{scope}
\path[clip] (  0.00,  0.00) rectangle (234.88,162.61);
\definecolor{drawColor}{RGB}{0,0,0}

\path[draw=drawColor,line width= 0.6pt,dash pattern=on 2pt off 2pt ,line join=round] (181.83, 53.23) -- (193.40, 53.23);
\end{scope}
\begin{scope}
\path[clip] (  0.00,  0.00) rectangle (234.88,162.61);
\definecolor{drawColor}{RGB}{0,0,0}

\node[text=drawColor,anchor=base west,inner sep=0pt, outer sep=0pt, scale=  0.88] at (200.34, 64.65) {\ShortBerlinOne};
\end{scope}
\begin{scope}
\path[clip] (  0.00,  0.00) rectangle (234.88,162.61);
\definecolor{drawColor}{RGB}{0,0,0}

\node[text=drawColor,anchor=base west,inner sep=0pt, outer sep=0pt, scale=  0.88] at (200.34, 50.20) {\ShortBerlinTen};
\end{scope}
\end{tikzpicture}%

  \vspace*{-0.4cm}
  \caption{Mean speedups of elliptic BCH searches ($k=1$) with sorted buckets over unsorted buckets on the \BerlinOne and \BerlinTen instances for $\rho \in \{ 0\s, 150\s, 300\s, 450\s, 600\s \}$.}
  \label{fig:elliptic_sorted_buckets_eval}
\end{figure}
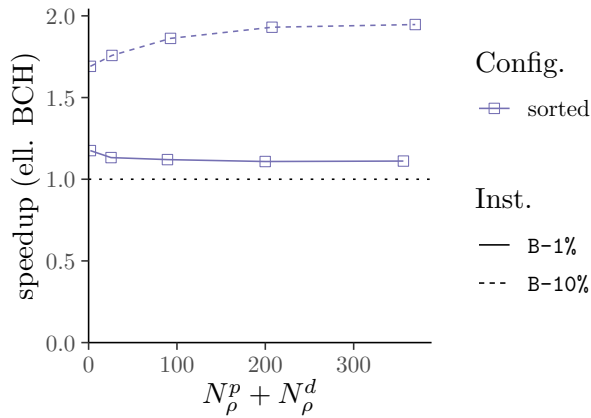
\begin{figure*}[t]
  \centering
    \input{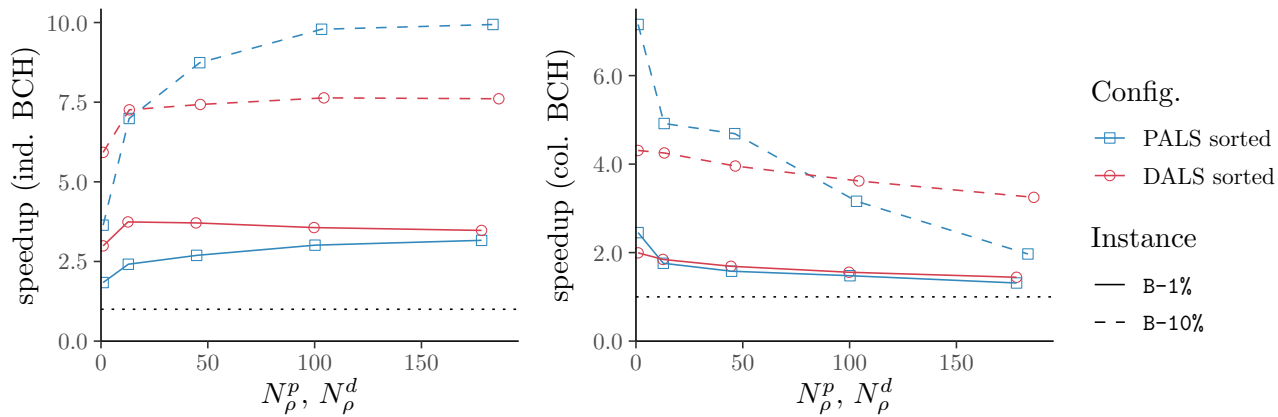}%

  \vspace*{-0.8cm}
  \caption{Mean speedups for individual (left, $k=1$) and collective (right) BCH queries with sorted buckets over unsorted buckets. 
  Considers the PALS and DALS cases on the \ShortBerlinOne and \ShortBerlinTen instances for $\rho \in \{ 0\s, 150\s, 300\s, 450\s, 600\s \}$.
  Note the different $y$-axes.}
  \label{fig:last_stop_sorted_buckets_eval}
\end{figure*}
In the following, we analyze the effect of sorted buckets on elliptic BCH searches as well as individual and collective last stop BCH searches.
We consider the reduction in the number of bucket entries scanned as well as the effects on the running time of the searches and the time for updating buckets.
We experimentally compare all searches with sorted and unsorted buckets on the \BerlinOne and \BerlinTen instances with $\rho \in \{ 0\s, 150\s, 300\s, 450\s, 600\s \}$ and $k=1$.

\parheader{Sorted Buckets for Elliptic BCH Searches}
The buckets for elliptic BCH searches are already strongly pruned using elliptic pruning.
Therefore, sorting these buckets only elicits a major effect with a sufficiently large number of vehicles. 
As we can see in~\cref{fig:elliptic_sorted_buckets_eval}, sorted buckets only have a limited impact for the \BerlinOne instance but a much larger one for the \BerlinTen instance as the latter considers ten times more vehicles.
On the larger input, sorted buckets reduce the number of entries scanned by about half, which leads to a decrease in the search time by up to $48\%$ ($37\ms$).
At the same time, maintaining the order of bucket entries increases the time for updating bucket entries by only about $32\mus$.
In conclusion, sorted buckets are a valuable improvement for elliptic BCH searches, particularly with respect to the scalability to larger numbers of vehicles. 

\parheader{Sorted Buckets for Last Stop BCH Searches}
In the following, we analyze the effect of sorted buckets on individual and collective last stop BCH searches.
We experimentally evaluate both searches with sorted and unsorted buckets on the \BerlinOne and \BerlinTen instances with $\rho \in \{ 0\s, 150\s, 300\s, 450\s, 600\s \}$ and $k=1$.
The speedups achieved with sorted buckets are shown in~\cref{fig:last_stop_sorted_buckets_eval}.

For last stop BCH searches, sorted buckets are vital to reduce the number of bucket entries scanned since we cannot use elliptic pruning.
For individual BCH searches, more than $97\%$ and $89\%$ fewer bucket entries are scanned with sorted buckets in the PALS and DALS cases, respectively.
This reduces search times by factors of up to $9.09$ and $7.14$.

For collective searches, the number of bucket entries scanned decreases by similar rates of $97\%$ and $87\%$.
However, the resulting speedups are less pronounced, particularly for larger numbers of meeting points in the PALS case. 
We attribute this to the fact that collective searches spend comparatively more time on pruning the searches.
This means that the searches need to spend less time scanning bucket entries, which limits the impact of sorted buckets.
Notably, collective PALS searches generate initial labels for every PD-pair but prune almost all of them immediately.
As the number of PD-pairs is proportional to $\rho^4$, this initialization can constitute up to $85\%$ of the search time for larger values of $\rho$ but sorted buckets have no effect on it.

Consequently, we observe speedups of only $1.96$ (PALS) and $3.22$ (DALS) for $\rho=600\s$ on \BerlinTen.
If we disregard the overhead for initial labels, these speedups increase to $7.51$ and $3.63$.

Maintaining sorted last stop buckets incurs an average overhead per request of about $10\mus$ for \BerlinOne and about $35\mus$ for \BerlinTen while the reduction in search time is one to three orders of magnitude larger.

\parheader{Collective BCH Searches}
\label{par:collective_bch_searches_experiments}
\begin{table*}[t!]
  \centering
  \caption{
  Comparison of the PALS and DALS running times (in \mus) of collective BCH searches (Coll.), individual BCH searches (BCH), and Dijkstra searches (Dij.) in their optimal configurations for three radii $\rho \in \{ 0\s, 300\s, 600\s \}$ on the \ShortBerlinOne and \ShortBerlinTen instances.
  Shows average number of edge relaxations ($\#_{\text{rel.}}$), number of bucket entries scanned ($\#_{\text{scans}}$), search time ($t_{\text{search}}$) and time for enumerating insertions ($t_{\text{enum}}$) per request.
  The smallest times per radius are marked in bold.
  }
  \begin{tabular}{cccRRRRRRRR}
  \toprule
  \multicolumn{3}{c}{} & \multicolumn{4}{c}{\BerlinOne} & \multicolumn{4}{c}{\BerlinTen} \\
  \cmidrule(lr){4-7}\cmidrule(l){8-11}
  Type & \makecell[c]{$\rho$} & Search & \makecell[c]{\#_{\text{rel.}}} & \makecell[c]{\#_{\text{scans}}} & \makecell[c]{t_{\text{search}}} & \makecell[c]{t_{\text{enum}}} & \makecell[c]{\#_{\text{rel.}}} & \makecell[c]{\#_{\text{scans}}} & \makecell[c]{t_{\text{search}}} & \makecell[c]{t_{\text{enum}}}\\
  \midrule
  \multirow{9}{*}{PALS} & \multirow{3}{*}{0} & Coll.  & 40 & 8 & 4.96 & \mathbf{0.04} & 19 & 11 & 3.89 & \mathbf{0.06}  \\
  & & BCH  & 37 & 8 & \mathbf{3.63} & 0.37 & 18 & 10 & \mathbf{3.11} & 0.56 \\
  & & Dij.  & 577 & 2 & 43.44 & 0.36 & 225 & 4 & 18.75 & 0.52 \\
  \cmidrule{2-11}
  & \multirow{3}{*}{300} & Coll.  & 412 & 57 & \mathbf{70.16} & \mathbf{0.08} & 168 & 58 & \mathbf{41.95} & \mathbf{1.49} \\
  & & BCH  & 967 & 274 & 73.95 & 44.12 & 797 & 1011 & 103.36 & 155.86  \\
  & & Dij. & 4302 & 17 & 497.94 & 40.4 & 3533 & 99 & 433.66 & 138.16 \\
  \cmidrule{2-11}
  & \multirow{3}{*}{600} & Coll. & 806 & 108 & \mathbf{286.11} & \mathbf{0.09} & 219 & 82 & \mathbf{213.9} & \mathbf{23.90} \\
  & & BCH & 5555 & 2514 & 424.65 & 812.64 & 4734 & 12214 & 823.72 & 3475.00 \\
  & & Dij. & 41092 & 137 & 4481.08 & 812.24 & 38102 & 960 & 4412.38 & 3098.09  \\
  \midrule
  \multirow{9}{*}{DALS} & \multirow{3}{*}{0} & Coll. & 210 & 676 & 36.56 & 0.76 & 191 & 5066 & 95.37 & \mathbf{2.74}  \\
  & & BCH  & 216 & 721 & \mathbf{23.78} & \mathbf{0.72} & 197 & 5419 & \mathbf{87.14} & 2.96 \\
  & & Dij.  & 19063 & - & 1665.24 & 15.60 & 14920 & - & 1344.48 & 58.98 \\
  \cmidrule{2-11}
  & \multirow{3}{*}{300} & Coll.  & 253 & 662 & \mathbf{58.04} & \mathbf{5.20} & 235 & 5049 & \mathbf{117.04} & \mathbf{20.97} \\
  & & BCH  & 2015 & 4116 & 182.81 & 145.01 & 1961 & 32487 & 623.84 & 596.46  \\
  & & Dij. & 26567 & - & 3561.61 & 232.62 & 22228 & - & 3014.30 & 984.84 \\
  \cmidrule{2-11}
  & \multirow{3}{*}{600} & Coll. & 296 & 656 & \mathbf{93.11} & \mathbf{13.84} & 277 & 5091 & \mathbf{157.07} & \mathbf{53.61} \\
  & & BCH & 8042 & 14602 & 685.23 & 1227.00 & 7683 & 115912 & 2214.15 & 4261.21 \\
  & & Dij. & 98143 & - & 12453.92 & 3063.71 & 88021 & - & 11307.36 & 12665.75 \\
  \bottomrule
  \end{tabular}
  \label{tab:last_stop_search_comparison}
\end{table*}
In~\cref{tab:last_stop_search_comparison}, we compare the search times and the times needed to enumerate candidate insertions for the three search approaches used for the PALS and DALS insertion types.
Additionally, we show the number of relaxed edges and scanned bucket entries.
We report the results for $\rho \in \{ 0\s, 300\s, 600\s \}$ on the \ShortBerlinOne and \ShortBerlinTen instances.
We use the optimal configuration for each combination of search type, insertion type, and radius.

At $\rho=0\s$, collective searches are slower than individual BCH searches as there is only a single pickup and dropoff so the overhead for explicitly maintaining labels instead of a single distance per vertex is unwarranted.
At $\rho=300\s$ and $\rho=600\s$, collective searches offer the best search times, though.
In the PALS case, collective searches are up to $4$ times faster than individual BCH searches.
In the DALS case, this relative speedup is even larger at up to $14$.
We attribute the better scalability of collective searches to two main advantages:

Firstly, collective searches can be pruned more precisely as we use lower bounds on the cost of specific PD-pairs or dropoffs instead of a general lower bound on the cost of every PD-pair or dropoff.
This applies to the stopping criteria for bucket scans and for the searches as a whole.

Secondly, collective searches consider all sources in one search, maximizing the amount of information available for domination pruning.
Bundled searches can only consider $k$ searches at once which means work may be repeated up to $\numpickups / k$ times.
Thus, the number of edge relaxations and bucket entry scans increases much faster with the number of pickups ($ \numpickups, \numdropoffs \sim \rho^2$) for individual BCH searches than for collective BCH searches.

In addition, the enumeration times remain small for collective searches while they increase massively with $\rho$ for individual BCH searches.
This is due to the fact that collective searches identify a single candidate insertion in the PALS case or a small set of candidate vehicles and dropoffs in the DALS case.
Contrarily, individual BCH searches first find all distances and then enumerate an insertion for each combination of candidate vehicle, pickup and dropoff.
As the number of PD-pairs is proportional to $\rho^4$, enumeration times of individual BCH searches quickly become very large with tens to hundreds of thousands of insertions tried. 

Collective searches scale worse in the PALS case compared to the DALS case because many more initial labels need to be created.
In the PALS case, one label is initialized for every PD-pair but almost all of these labels are pruned immediately based on cost or domination pruning.
For instance, an average of $99.6\%$ of initial labels are discarded right away for $\rho=600\s$ on the \ShortBerlinTen instance.
Checking and pruning these labels makes up $85\%$ of the search time of collective searches in the PALS case.
In the DALS case, we only need to initialize one label per dropoff, which vastly reduces this overhead for pruning initial labels.

\mysubsection{Comparison with Baseline Dispatcher}
\label{subsec:comparison_with_baseline_dispatcher}
\begin{table*}[t]
  \centering
  \caption{Running times (in \mus) of different phases of the baseline (\tabbase), na\"ively extended baseline (\tabbaseast), \myalg (\tabmyalg), and \myalgcch (\tabmyalgcch) with different radii ($\rho \in \{0\s,300\s,600\s\}$) on \ShortBerlinOne, \ShortBerlinTen, \ShortRuhrOne, and \ShortRuhrTen.
  Shows mean times for finding $\Prho$ and $\Drho$, PD-distance searches, elliptic BCH searches, enumerating ordinary and PBNS insertions, PALS and DALS searches, and updating routes and buckets as well as the mean total time per request.
  }
  \begin{tabular}{cRcRRRRRRRR}
    \toprule
    \makecell[c]{Inst.} & \makecell[c]{\rho} & \makecell[c]{Alg.}  & \makecell[c]{\text{find} \\ \Prho, \Drho} & \makecell[c]{\text{PD}} & \makecell[c]{\text{Ell.} \\ \text{BCH}} & \makecell[c]{\text{Ord.\&}\\\text{PBNS}} & \makecell[c]{\text{PALS}} & \makecell[c]{\text{DALS}} & \makecell[c]{\text{update}} & \makecell[c]{\text{total}} \\
    \midrule
    \multirowcell{6}{\ShortBerlinOne} & \multirowcell{3}{0} & \tabbase & 0 & 21 & 113 & 68 & 55 & 1682 & 97 & 2036 \\
    & & \tabmyalg & 2 & 74 & 115 & 41 & 4 & 24 & 151 & 413 \\
    & & \tabmyalgcch & 2 & 48 & 125 & 39 & 6 & 31 & 185 & 436\\
    \cmidrule{2-11}
    & \multirowcell{2}{300} & \tabmyalg & 173 & 300 & 617 & 151 & 72 & 63 & 154 & 1530 \\
    & & \tabmyalgcch & 171 & 236 & 596 & 141 & 74 & 103 & 188 & 1510 \\
    \cmidrule{2-11}
    & \multirowcell{2}{600} & \tabmyalg & 617 & 1536 & 2536 & 881 & 298 & 107 & 155 & 6129 \\
    & & \tabmyalgcch & 610 & 1325 & 2120 & 814 & 214 & 140 & 189 & 5411\\
    \midrule
    \multirowcell{9}{\ShortBerlinTen} & \multirowcell{3}{0} & \tabbase & 0 & 19 & 328 & 247 & 27 & 1361 & 94 & 2076 \\
    & & \tabmyalg & 3 & 73 & 351 & 346 & 4 & 88 & 245 & 1111 \\
    & & \tabmyalgcch & 3 & 48 & 474 & 338 & 6 & 145 & 638 & 1653 \\
    \cmidrule{2-11}
    & \multirowcell{3}{300} & \tabbaseast & 194 & 30600 & 20196 & 905 & 2275 & 54270 & 119 & 108559 \\
    & & \tabmyalg & 195 & 308 & 1770 & 783 & 51 & 138 & 246 & 3490 \\
    & & \tabmyalgcch & 202 & 247 & 2175 & 752 & 45 & 216 & 626 & 4262 \\
    \cmidrule{2-11}
    & \multirowcell{3}{600} & \tabbaseast & 716 & 513788 & 77813 & 3313 & 28901 & 220555 & 118 & 845204 \\
    & & \tabmyalg & 708 & 1662 & 6891 & 3227 & 312 & 211 & 249 & 13260 \\
    & & \tabmyalgcch & 722 & 1429 & 7470 & 3000 & 169 & 278 & 632 & 13701 \\
    \midrule
    \multirowcell{4}{\ShortRuhrOne} & \multirowcell{2}{0} & \tabbase &  0 & 18 & 156 & 224 & 103 & 7157 & 78 & 7735 \\
    & & \tabmyalg & 3 & 52 & 164 & 108 & 4 & 29 & 119 & 479  \\
    \cmidrule(l){2-11}
    & 300 & \tabmyalg & 139 & 225 & 668 & 178 & 69 & 57 & 126 & 1461 \\
    \cmidrule(l){2-11}
    & 600 & \tabmyalg & 449 & 945 & 2090 & 476 & 275 & 82 & 130 & 4446 \\
    \midrule
    \multirowcell{4}{\ShortRuhrTen} & \multirowcell{2}{0} & \tabbase & 0 & 17 & 617 & 929 & 61 & 5114 & 76 & 6814 \\
    & & \tabmyalg & 3 & 54 & 809 & 913 & 4 & 147 & 237 & 2167 \\
    \cmidrule(l){2-11}
    & 300 & \tabmyalg & 145 & 230 & 3381 & 1198 & 55 & 173 & 253 & 5433 \\
    \cmidrule(l){2-11}
    & 600 & \tabmyalg & 468 & 965 & 9586 & 2262 & 311 & 216 & 271 & 14078 \\
    \bottomrule
  \end{tabular}
  \label{tab:comparison_with_LOUD_run_times}
\end{table*}
\begin{table}[t]
  \centering
  \caption{Solution quality of \myalg with different radii ($\rho \in \{0\s,300\s,600\s\}$) on \ShortBerlinOne, \ShortBerlinTen, \ShortRuhrOne, and \ShortRuhrTen.
  For riders, we report the average wait and trip times (in mm:ss).
  For vehicles, we give the average occupancy while driving and average total operation time (in hh:mm).}
  \begin{tabular}{crrrrr}
    \toprule
    \makecell[c]{Inst.} & \makecell[c]{$\rho$} & \makecell[c]{\text{wait}} & \makecell[c]{\text{trip}} & \makecell[c]{\text{occ}} & \makecell[c]{\text{op}} \\
    \midrule
    \multirowcell{3}{\ShortBerlinOne} & 0 & 3:44 & 16:53 & 0.88 & 4:28 \\
    & 300 & 3:15 & 15:54 & 0.93 & 4:00 \\
    & 600 & 3:27 & 16:02 & 0.94 & 3:54 \\
    \midrule
    \multirowcell{3}{\ShortBerlinTen} & 0 & 2:40 & 15:34 & 1.06 & 3:14 \\
    & 300 & 2:36 & 15:21 & 1.20 & 2:44 \\
    & 600 & 2:53 & 15:40 & 1.24 & 2:38 \\
    \midrule
    \multirowcell{3}{\ShortRuhrOne} & 0 & 4:32 & 18:05 & 0.80 & 4:29 \\
    & 300 & 3:57 & 16:41 & 0.83 & 4:06 \\
    & 600 & 4:02 & 16:29 & 0.84 & 4:01 \\
    \midrule
    \multirowcell{3}{\ShortRuhrTen} & 0 & 2:52 & 15:17 & 0.92 & 3:24 \\
    & 300 & 2:29 & 14:24 & 0.98 & 3:03 \\
    & 600 & 2:38 & 14:29 & 1.00 & 2:58 \\
    \bottomrule
  \end{tabular}
  \label{tab:comparison_with_LOUD_solution_quality}
\end{table}
In this section, we compare \myalg with the baseline dispatcher by~\citet{Buchhold2021}.
For this comparison, we implemented the extended \myalg cost function in the baseline dispatcher.

\parheader{Running Times} 
We give the running times for the different phases of both our algorithm (\tabmyalg) and the baseline (\tabbase) on all instances in~\cref{tab:comparison_with_LOUD_run_times}.

First, we consider the scenario without meeting points ($\rho=0\s$) and compare \myalg with the baseline dispatcher.
Here, sorted buckets have no positive impact on the search times of elliptic BCH searches even though the number of bucket entries scanned is reduced.
We attribute this to the fact that our implementation is meant to deal with any number of meeting points while the baseline is specialized for the case of $\numpickups=\numdropoffs=1$.
Our last stop BCH searches are well suited for $\rho=0\s$, though.
They are up to one and two orders of magnitude faster than the baseline Dijkstra searches in the PALS and DALS cases, respectively.
In the baseline, last stop searches make up at least $66\%$ and up to $93\%$ of the total running time while our last stop searches make up at most $8\%$ of the total running time of \myalg.
Note that maintaining sorted buckets does lead to increased update times, though.
In total, we can reduce the average time per request by factors of about $5$, $2$, $16$, and $3$ for the \ShortBerlinOne, \ShortBerlinTen, \ShortRuhrOne, and \ShortRuhrTen instances, respectively, compared to the baseline.

Unfortunately, the source code of \myalg's closest existing competitor \starsplus~\cite{Mounesan2021} is currently not publicly available, making an experimental comparison of both approaches difficult. 
Instead, for now, we validate the effectiveness of our approach by comparing it with a na\"ive extension of the techniques used by our baseline algorithm.
For this, we configured \myalg to use no bundled searches or sorted buckets, to use Dijkstra searches for the PALS and DALS cases, and to use point-to-point CH queries to compute PD-distances.
We report the running times of this extension (\tabbaseast) for $\rho=300\s$ and $\rho=600\s$ on the \ShortBerlinTen instance and compare them to \myalg.

We find that bundling the elliptic BCH searches and using sorted buckets makes them about one order of magnitude faster than the na\"ive extension.
PD-distances can be computed around two orders of magnitude faster with our bucket based approach than with individual CH queries.
Our collective searches for the PALS and DALS cases beat the na\"ive approach by two and three orders of magnitude, respectively. 

\parheader{Running Times with CCHs}
We also equipped \myalg with the possibility to use customizable CHs (CCHs)~\cite{Dibbelt2016}, a technique that allows a CH to quickly be adapted to changing travel times in the road network at the cost of a slight increase in shortest path query times.
This extension to \myalg was straightforward and all of our many-to-many routing techniques can still be used.
We experimentally evaluated \myalgcch by running it on \BerlinOne and \BerlinTen with $\rho \in \{ 0\s, 300\s, 600\s \}$ in the same configurations as with standard CHs.
The resulting running times are shown in~\cref{tab:comparison_with_LOUD_run_times} (\tabmyalgcch).

As expected, the running times of \myalg increase slightly when using CCHs.
In particular, the time for updating buckets and the DALS search time always increase.
However, for most components of \myalg, we see similar or even smaller running times with CCHs compared to standard CHs.
We attribute this to the fact that we can employ elimination tree searches, a specialized type of query for CCHs that is often better suited for bundling than standard CH queries~\cite{Buchhold2019}.
This advantage of elimination tree searches is evidenced by the fact that when using CCHs our bundled elliptic BCH queries and PD-distance searches fare relatively better with increasing $\rho$ and number of meeting points $\numpickups$, $\numdropoffs$.

Overall, the running time of \myalg increases by less than $1\ms{}$ per request on average when using CCHs instead of standard CHs.
At the same time, CCHs allow us to adapt our CH to changed travel times in the road network in less than $100\ms{}$.
Thus, \myalgcch combines fast dispatching with the ability to react to changing traffic conditions in real time.

\parheader{Solution Quality}
In the following, we give a first idea of how trip times and vehicle operation times can be improved by extending taxi sharing with meeting points.

In~\cref{tab:comparison_with_LOUD_solution_quality}, we compare the solution quality of \myalg with $\rho \in \{ 0\s, 300\s, 600\s \}$.
Note that we also allow riders to walk to their destinations as an alternative to a taxi sharing trip (see~\cref{subsec:walking_time_and_walking_to_the_destination}).
At $\rho=300\s$, we observe improvements for both riders and vehicles.
Here, we expect that mostly existing waiting times are replaced with walking which leads to benefits for all agents.
If we allow longer walking distances at $\rho=600\s$, we can further improve the vehicle operation times.
However, since we equally weight vehicle and rider times ($\tripweight=1$), riders are often required to walk further to save time for vehicles, increasing the average wait and trip times.
Different values for the cost function parameters may be better suited to reflect the needs of riders, particularly in a future of autonomous taxis.
We defer an according analysis to future work.

\section{Conclusions and Future Work}
\label{sec:conclusion}
\myalg develops efficient many-to-many routing with bucket contraction hierarchies for dynamic taxi sharing.
This allows real-time dispatching systems to enjoy benefits like a reduction in operating costs and air pollution even with large vehicle fleets and many meeting points.
A flexible cost function allows configuration to many situations, e.g. using walking, bicycles or scooters. 
We expect that the new techniques like sorted buckets can also be applied for other problems that use many-to-many routing with correlated sources and targets.

\myalg's small running times open dynamic taxi sharing up to a variety of extensions that promise to improve the quality of service.
We are particularly interested in going away from greedy online scheduling, instead taking into account pre-booked trips and opportunities to transparently change existing trips for local search style optimizations.
Additionally, we expect that we can generalize \myalg to integrate it with public transportation s.t. meeting points can be stops of buses or trains and the cost function has to take into account the public transportation schedule.
A longer term perspective is to allow transfers between vehicles during a trip.
This may increase the number of shared rides, eventually leading to a highly adaptive software defined public transportation system. 
These extensions imply interesting algorithmic challenges as they lead to a combinatorial explosion of possible route options.

Future parallelization both over over different meeting points and over entire requests can improve scalability to even larger metropolitan regions.

%
%

\bibliography{karri}

\end{document}